\documentclass{llncs}  

\usepackage{booktabs}

\usepackage{amsmath}
\usepackage{amssymb}
\usepackage{stmaryrd}
\usepackage{xstring}
\usepackage{hyperref}

\usepackage{url}

\usepackage{graphicx}
\usepackage{tikz}
\usepackage{color}
\usepackage{verbatim}
\usepackage{enumerate}


\newcommand{\mycond}[2]{\mathit{In}^{#1}_{#2}(\CSequence,\CSequenced,\Call)}
\newcommand{\myconclin}[2]{\mathit{Concl}^{#1}_{#2}(\CSequence,\CSequenced,\Call)}
\newcommand{\myconclout}[2]{\mathit{Concl}^{#1}_{#2}(\CSequence,\CSequenced,\Call,\Calld)}

\newcommand{\pie}[1]{%
\begin{tikzpicture}[scale=0.7]
 \draw (0,0) circle (1ex);\fill (1ex,0) arc (0:#1:1ex) -- (0,0) -- cycle;
\end{tikzpicture}%
}

\newcommand{\pieup}[1]{%
\begin{tikzpicture}[scale = 0.5]
 \draw (0,0) circle (1ex);\fill (1ex,0) arc (0:#1:1ex) -- (0,0) -- cycle;
\end{tikzpicture}%
}

\newcommand{\myprivacy}[1]{%
    \IfEqCase{#1}{%
        {1}{\ensuremath{\scalebox{0.8}{\pie{0}}}}%
        {2}{\ensuremath{\scalebox{0.8}{\pie{180}}}}%
        {3}{\ensuremath{\scalebox{0.8}{\pie{360}}}}%
    }[\PackageError{privacy}{Undefined option to privacy: #1}{}]%
}%

\newcommand{\myprivacyup}[1]{%
    \IfEqCase{#1}{%
        {1}{\ensuremath{\pieup{0}}}%
        {2}{\ensuremath{\pieup{180}}}%
        {3}{\ensuremath{\pieup{360}}}%
    }[\PackageError{privacyup}{Undefined option to privacy: #1}{}]%
}%

\newcommand{\observance}[1]{%
    \IfEqCase{#1}{%
        {before}{\ensuremath{\beta}}
        {after}{\ensuremath{\alpha}}
    }[\PackageError{openness}{Undefined option to openness: #1}{}]%
}%

\newcommand{\prtype}{\mathsf{p}}
\newcommand{\prset}{\mathsf{P}}
\newcommand{\dirtype}{\mathsf{d}}
\newcommand{\dirset}{\mathsf{D}}
\newcommand{\otype}{\mathsf{o}}
\newcommand{\oset}{\mathsf{O}}

\newcommand{\callupone}[2]{#1^{\,#2}}

\newcommand{\langu}{\Formulas}
\newcommand{\lang}{\langu}

\newcommand{\cc}{\CSequence}

\newcommand{\calltype}{\tau}

\newcommand{\outcond}[3]{\mathit{Out}^{#1}_{#2}(\Call,\Calld)}


\newcommand{\NI}{\noindent}
\newcommand{\bfe}[1]{\begin{bfseries}\emph{#1}\end{bfseries}\index{#1}}

\newcommand{\sse}{\mbox{$\:\subseteq\:$}}
\newcommand{\HB}{\hfill{$\Box$}}
\newcommand{\C}[1]{\mbox{$\{{#1}\}$}}





\newcommand{\set}[1]{{\{ #1 \}}}





\newcommand{\Formulas}{{\mathcal L}}









\newcommand{\Model}{\mathcal{M}}





\newcommand{\Agents}{Ag}

 \newcommand{\Situation}{\mathsf{s}}
 
 \newcommand{\init}{\mathfrak{i}}
 \newcommand{\Call}{\mathsf{c}}
 \newcommand{\Calld}{\mathsf{d}}

\newcommand{\CSequences}{{\bf C}}

 \newcommand{\CSequence}{{\mathbf c}}

 \newcommand{\CSequenced}{{\mathbf d}}

\newcommand{\pull}{{\triangleleft}}
\newcommand{\push}{\triangleright}
\newcommand{\pushpull}{{\scriptstyle\Diamond}}

\newcommand{\Atoms}{\mathsf{S}}
\newcommand{\Atomsb}{\mathsf{Q}}

\newcommand{\ag}[1]{{#1}}



\newtheorem{protocol}{Protocol}



\pagestyle{plain}

\begin{document}

\mainmatter

\title{When Are Two Gossips the Same?}
\subtitle{Types of Communication in Epistemic Gossip Protocols\footnote{A shorter version of this paper appears in the Proceedings of the
22nd International Conference on Logic for Programming Artificial Intelligence and Reasoning (LPAR-22).}
} 
\author{Krzysztof R. Apt\inst{1,2} 
\and Davide Grossi\inst{3} 
\and Wiebe van der Hoek\inst{4} 
}
\institute{%
CWI, Amsterdam, The Netherlands
\and
MIMUW, University of Warsaw, Warsaw, Poland
\and
Bernoulli Institute, University of Groningen, Groningen, The Netherlands
\and
Department of Computer Science, University of Liverpool, Liverpool, U.K.
}

\maketitle


\begin{abstract} 
  We provide an in-depth study of the knowledge-theoretic aspects of
  communication in so-called gossip protocols. Pairs of agents
  communicate by means of calls in order to spread
  information---so-called secrets---within the group. Depending on the
  nature of such calls knowledge spreads in different ways within the
  group. Systematizing existing literature, we identify 18 different
  types of communication, and model their epistemic effects through
  corresponding indistinguishability relations. We then provide a
  classification of these relations and show its usefulness for an
  epistemic analysis in presence of different communication
  types. Finally, we explain how to formalise the assumption that the
  agents have common knowledge of a distributed epistemic gossip
  protocol.
\end{abstract}


\section{Introduction}


In the gossip problem \cite{tijdeman:1971,baker72gossips} a number of agents, each one
knowing a piece of information (a \emph{secret}) unknown to the
others, communicate by one-to-one interactions (e.g., telephone calls).
The result of each call is that the two agents involved in it learn
all secrets the other agent knows at the time of the call. The problem
consists in finding a sequence of calls which disseminates all the
secrets among the agents in the group.  It sparked a large literature
in the 70s and 80s
\cite{tijdeman:1971,baker72gossips,hajnal72cure,bumby81problem,seress86quick},
typically on establishing---in the above and other variants
of the problem---the minimum number of calls to achieve dissemination
of all the secrets. This number has been proven to be $2n - 4$, where
$n$, the number of agents, is at least $4$.

The gossip problem constitutes an excellent toy problem to study
information dissemination in distributed environments. A vast
literature on distributed protocols has taken up the problem and
analyzed it together with a wealth of variations including different
communication primitives (e.g., broadcasting instead of one-to-one
calls), as well as communication structures (networks), faulty
communication channels \cite{chlebus06robust}, and
probabilistic information transmission, where the spreading of gossips
is used to model the spread of an epidemic
\cite{bailey57mathematical,procaccia07gossip}.  Surveys are
\cite{fraigniaud94methods,hromkovic96dissemination,HHL88,hromkovic05dissemination}.


\paragraph{Background}

The present paper investigates a knowledge-based approach to the gossip
problem in a multi-agent system. Agents perform calls following
individual epistemic protocols they run in a distributed
fashion. These protocols tell the agents which calls to execute
depending on what they know, or do not know, about the information
state of the agents in the group. We call the resulting distributed
programs \emph{epistemic gossip protocols}, or {\em gossip protocols},
for short. Such protocols were introduced and studied in
\cite{ADGH14,apt15epistemic}.  `Distributed' means that each agent
acts autonomously, and `epistemic' means that the gossip protocols
refer to the agents' knowledge.  The reliance of these protocols on
epistemic properties makes them examples of so-called knowledge-based
protocols, as studied in the context of distributed systems
\cite{parikhetal:1985,kurki-suonio86towards,halpern92little,fagin97knowledge}.

Besides the aforementioned \cite{ADGH14,apt15epistemic}, a number of
papers have recently focused on epistemic gossip protocols. In
\cite{herzig17how} gossip protocols were studied that aim at achieving
higher-order shared knowledge, for example knowledge of level 2 which
stipulates that everybody knows that everybody knows all secrets. In
particular, a protocol is presented and proved correct that achieves
in $(k+1)(n-2)$ steps shared knowledge of level $k$.  Further, in
\cite{cooper16simple} gossip protocols were studied as an instance of
multi-agent epistemic planning that is subsequently translated into
the classical planning language PDDL. More recently,
\cite{ditmarsch17epistemic} presented a study of {\em dynamic} gossip
protocols in which the calls allow the agents not only to share the
secrets but also to share the communication channels (that is, who can
call whom).  In turn, \cite{apt17computational} studied the
computational complexity of distributed epistemic gossip protocols,
while \cite{AW18} showed that implementability, partial correctness,
termination, and fair termination of these protocols is decidable.

More broadly, the paper positions itself within the long-standing
tradition of analysis of distributed systems from the perspective
of epistemic logic \cite{fagin95reasoning,meyer95epistemic}. Such a
perspective has led in
\cite{parikhetal:1985,kurki-suonio86towards,fagin97knowledge} to a
useful level of abstraction allowing one to study a number of topics
in distributed computing from the knowledge theoretic perspective, in
particular protocols for the sequence transmission problem (for
instance the alternating bit protocol) in \cite{halpern92little},
coordination \cite{halpern90knowledge}, and secure communication
\cite{bataineh11abstraction}, to mention some. The characteristic
feature of these programs is that they use tests for knowledge.


\paragraph{Contributions}

The form of communication underpinning the epistemic gossip problem
may vary from work to work, and the above papers sometimes make
different assumptions on the nature of communication upon which the
considered protocols are based. Little attention has been devoted to a
systematic analysis, with the notable exception of
\cite{ditmarsch16parameters}, which singled out some of the key
informational assumptions on calls---specifically observability,
synchronicity and asynchronicity assumptions---and systematically
studied the effects of such assumptions on the aforementioned $2n-4$
call-length bound. 

It is our claim that research on epistemic gossip protocols can at
this point benefit from a systematisation of the key possible
assumptions that a modeler can make on the type of communication
(call) underpinning such protocols.  From an epistemic logic point of
view, each call type induces a specific notion of knowledge.  The
comparison of the resulting definitions of knowledge is of obvious
importance for the study of epistemic aspects of communication.

By `type of communication' we mean the way in which communication
takes place and may be observed, and to focus on it we disregard the
type of information exchanged (in particular, whether higher order
knowledge, or communication links may be exchanged---matters we do not
address), or the type of information the agents have initially at their
disposal (e.g., whether it is common knowledge what the number of
agents is).  

More specifically, here are the features we focus on.
First of all, a call between two agents takes place in the presence of
other agents. What these other agents become aware of after the call
is one natural parameter. We call it \emph{privacy}. The second
parameter, that we call \emph{direction}, clarifies in which direction
the information flows. Here we focus on three possibilities: they
exchange all information, one agent passes all information to the
other one, or one agent acquires all information available to the
other one. The final parameter of a call is what we call
\emph{observance}. It determines whether the agent(s) affected by the
call learn what information was held by the other agent prior to the
call. 

By a \emph{call type} we mean a combination of these three
parameters. What the agents know after a call, or more generally a
sequence of calls, depends on the assumed call type. This yields in
total 18 possibilities.  The paper provides a framework in which we
model these possibilities in a unified way.  This allows us to provide
in Theorem \ref{thm:classification} a complete classification of the
resulting indistinguishability relations.  This in turn makes it
possible to clarify in Propositions \ref{prop:1} and \ref{prop:newnew}
the effect of a call type on the truth of the considered
formulas. Additionally, we provide in Proposition \ref{prop:CK} a
natural proposal on how to incorporate into this framework an assumption
that the agents have common knowledge of the underlying protocol.

\paragraph{Paper outline}

Section \ref{sec:preliminaries} introduces gossip protocols by
example, and identifies the features of calls we will focus on.
Section \ref{sec:language} introduces the syntax and semantics of a
simple epistemic language to study communication and its effects in
gossip protocols, together with some motivating examples. Crucially
the semantics introduced is parametrised by the indistinguishability
relations which, for each call type, identify the call sequences that
the agents cannot distinguish.  These equivalence relations are
systematically introduced and defined in Section \ref{sec:relations},
and then compared in terms of their relative informativeness in
Section \ref{sec:comparison}.  The proposed systematisation is then
applied in Section \ref{sec:applications}: first, to deliver general
results on the analysis of how knowledge depends on the assumed call types
(Section \ref{sec:epistemic});
second, to offer a natural approach to the problem of modelling common
knowledge of protocols in the epistemic gossip setting (Section
\ref{sec:further}).  Finally, Section \ref{sec:conclusions} summarises
our results and charts several directions for future research.


\section{Knowledge-Based Gossip} \label{sec:preliminaries}

We start by recalling the notion of a gossip protocol, moving then to
introduce the formal set-up of the paper.

\subsection{Gossip protocols}

Gossip protocols aim at sharing knowledge between agents in a pre-described way. 
This is the paradigmatic setup:
\begin{quote}
{\it Six friends each know a secret. They can call each other by phone. In each call they exchange all the secrets they know. How many calls are needed for everyone to know all secrets?}
\end{quote}
Let us generalise this to the case of $n \geq 2$ agents and focus on
protocols that are {\em correct} (in the sense that they spread all
secrets). If $n = 2$, the two agents $\ag{a}$ and $\ag{b}$ need to
make only one phone call, which we denote by $\ag{ab}$ (`$\ag{a}$
calls $\ag{b}$'). For $n = 3$, the call sequence $ab, bc, ca$ will
do. Let us look at a protocol for $n \geq 4$ agents.

\begin{protocol} \label{protocol.nul} Choose four from the set of
  agents $\Agents$, say $a, b, c, d$, and one of those four, say
  $a$. First, $a$ makes $n - 4$ calls to each agent in
  $\Agents \setminus \{a,b,c,d\}$. Then, the calls $ab,cd,ac,bd$ are
  made. Finally $a$ makes another call to each agent from
  $\Agents\setminus\{a,b,c,d\}$. \label{prot.friends} \end{protocol}
This adds up to $(n-4) + 4 + (n-4) = 2n-4$ calls. For $n=6$ we get a
call sequence $ae,a\mathit{f},ab,cd,ac,bd,ae,a\mathit{f}$ of $8$
calls. All agents are then familiar with all secrets. It was shown
that less than $2n-4$ calls is insufficient to distribute all secrets
\cite{tijdeman:1971}.

The above protocol assumes that the agents can coordinate their
actions before making the calls.  But often such coordination is not
possible. Suppose some students of a given cohort receive an
unexpected invitation for a party. The members of the cohort may be
curious to find out about who received an invitation, in which case
they will resort to phone calls based on the knowledge, or better,
ignorance, they have about the secrets (in this context: extended
invitations) of others.  Since in such a distributed protocol several
agents may decide to initiate a call at the same time, we assume the
presence of an {\em arbiter} who breaks the ties in such cases. Let us
now consider such an epistemic protocol.

\begin{protocol}[Hear my secret]\label{protocol.mineen}
Any agent $a$ calls agent $b$ if $a$ does not know that $b$ is familiar with $a$'s secret.
\end{protocol}
This protocol has been proven in \cite{apt15epistemic} to terminate
and be correct, under specific assumptions on the type of
communication taking place during each call. In this paper we aim at
providing a systematic presentation of such assumptions and at an
analysis of their logical interdependencies.


\medskip

Throughout the paper we assume a fixed finite set $\Agents$ of at
least three \bfe{agents}. We further assume that each agent holds
exactly one \bfe{secret} and that the secrets are pairwise different.
We denote by $\mathsf{S}$ the set of all secrets, the secret of agent
$a$ by $A$, the secret of agent $b$ by $B$, and so on. A secret can be
any piece of data, for instance birthday, salary or social security
number.  Furthermore, we assume that each secret carries information
identifying the agent to whom this secret belongs.  So once agent $b$
learns secret $A$ she knows that this is the secret of agent $a$.

\subsection{Calls}  \label{subsec:calls}

In the context of gossip protocols calls constitute the sole form of
knowledge acquisition the agents have at their disposal.  Each
\bfe{call} concerns two agents, the \emph{caller} ($a$, below) and the
\emph{callee} ($b$, below). We call $a$ the \emph{partner} of $b$ in
the call, and vice versa. Any agent $c$ different from $a$ and $b$ is
called an {\em outsider}.  We study the following properties of calls:
\begin{itemize}
\item \bfe{privacy}, which is concerned with what the outsiders note about the call,
\item \bfe{direction}, which clarifies the direction of the information flow in the call,
\item \bfe{observance}, which clarifies, when an agent $a$ is informed by $b$, whether $a$ sees $b$'s secrets before 
adding them to her own set, or only sees the result of the fusion of the two sets of secrets.
\end{itemize}

More specifically, we distinguish three \bfe{privacy degrees} of a call where agent $a$ calls $b$:

\begin{itemize}

\item $\myprivacy{1}$: every agent $c \neq a,b$ notes that $a$ calls $b$,

\item $\myprivacy{2}$: every agent $c \neq a,b$ notes that some call takes place,
though not between whom,


\item \myprivacy{3}: no agent $c \neq a,b$ notes that a call is taking place.

\end{itemize}

Intuitively, these degrees can be ordered as
$\myprivacy{1} <_\prtype \myprivacy{2} <_\prtype \myprivacy{3}$, with
$\myprivacy{1}$ meaning no privacy at all, $\myprivacy{2}$ ensuring
anonymity of the caller and callee, and $\myprivacy{3}$ denoting full
privacy.  Conversely, from the perspective of the agents not involved
in the call, a call with the privacy level $\myprivacy{1}$ is the most
informative, while a call with the privacy level $\myprivacy{3}$ is
the most opaque. 

We distinguish three \bfe{direction types}, in short
\bfe{directions}, of a call:

\begin{itemize}

\item {\bfe{push}}, written as $\push$.  As a result of the call the callee learns all the secrets held by the
  caller. 

\item {\bfe{pull}}, written as $\pull$. As a result of the call the caller learns all the secrets held by the callee. 
  
\item {\bfe{push-pull}}, written as $\pushpull$. As a result of the call the caller and the callee learn each other's
  secrets. 
\end{itemize}

Depending on the direction of a call between $a$ and $b$, one or both
agents can learn \emph{directly} new information thanks to it. We say
that these are the agents {\em affected} in the call. 
More formally, 
an agent $a$ is \bfe{affected} by a call $\Call$ if $\Call$ is
one of the following forms:
\[
a \pushpull b, b \pushpull a, b \push a, \mathit{or}~a\, \pull \, b.
\]

Intuitively, $a$ is affected by the call if it can affect the set
of secrets $a$ is familiar with.
This brings us to two possible levels of \bfe{observance} of a call:

\begin{itemize}

\item \observance{after}: During the call the affected agent(s) add
  the secrets of their partner to their own secrets, and only
  \emph{after} that, inspect the result.\footnote{This mode is akin to
    the caller and callee interacting through a third party, who first
    collects the caller's and callee's secrets separately, and then
    shares their union with the affected agent(s).  We are indebted to
    R.~Ramezanian for this observation.}

\item \observance{before}: During the call the affected agent(s)
  inspect the secrets of their partner \emph{before} adding them to
  their own secrets.
\end{itemize}

Intuitively, the observance level $\observance{after}$ is less
informative for an affected agent than $\observance{before}$, because in
the latter case she also learns which secrets were known to the
other agent before adding them to the secrets she is familiar with.
Let
\begin{itemize}
\item $\prset = \{\myprivacy{1}, \myprivacy{2}, \myprivacy{3}\}$, 

\item $\dirset = \{\pushpull,\triangleleft,\triangleright\}$,

\item $\oset = \{\observance{after},\observance{before}\}$.

\end{itemize}
Each call between agents $a$ and $b$ is of the shape
$\callupone{ab}{\tau}$, where
$\tau = (\prtype, \dirtype, \otype) \in \prset \times \dirset \times
\oset$ is called its \bfe{type}.  So we defined in total 18 call
types.  To clarify their effect on communication we will elaborate on
some representative call types in
Examples~\ref{exa:preliminary1}--\ref{exa:preliminary3}.

The types $(\myprivacy{1}, \pushpull, \observance{before})$ and
$(\myprivacy{2}, \pushpull, \observance{before})$ were studied in
\cite{ADGH14} while the types
$(\myprivacy{3}, \pushpull, \observance{after})$,
$(\myprivacy{3}, \push, \observance{after})$, and
$(\myprivacy{3}, \pull, \observance{after})$, were analyzed in
\cite{apt15epistemic}. 
For a type $\tau$ like
$(\myprivacy{1},\pushpull,\observance{before})$, we define
$\tau(\prtype) = \myprivacy{1}, \tau(\dirtype) = \pushpull$ and
$\tau(\otype) = \observance{before}$.






Often, the call type (or parts of it) is (are) clear from the context,
and we omit it (them).  In our examples, at the level of calls, we
often only explicitly mention the direction type. Given a
call between $a$ and $b$ we shall sometimes write it simply as $ab$
for the direction type $\pushpull$, $a \push b$ for the
direction type $\push$ and $a\, \pull\, b$ for the direction type $\pull$.





\section{Language and Semantics} \label{sec:language}

In this section we introduce a modal language for epistemic gossip and its formal semantics.

\subsection{Modal language}

We are interested in determining agents' knowledge after a sequence of
calls took place. To this end we use the following modal language
$\lang$ for epistemic logic:
\[ \begin{array}{lcl}
\phi & ::= & F_a S  \mid \neg \phi \mid \phi \wedge \phi \mid \phi \vee \phi \mid K_a \phi,
   \end{array} \] 
 where $a \in \Agents$ and  $S \in \Atoms$. 
  
 In what follows we refer to the elements $\phi$ of $\lang$ as
 \bfe{epistemic formulas}, or in short, just formulas.  We read
 $F_a S$ as `agent $a$ is familiar with the secret $S$' (or `$S$
 belongs to the set of secrets $a$ has learned') and $K_a \phi$ as
 `agent $a$ knows that formula $\phi$ is true'.  So $\lang$ is an
 epistemic language with the atomic formulas of the form $F_a S$.

 The above language was introduced in \cite{apt15epistemic}. It is a
 modification of the language introduced in \cite{ADGH14}.





\begin{example} \label{exa:expert}
Consider the statement that agent $a$ is familiar with all the secrets.
This can be expressed as the formula
\[
\bigwedge_{b \in \Agents} F_a B
\]
that we subsequently abbreviate to $Exp_a$ (``$a$ is an expert'').

Here and elsewhere for simplicity we refer in the conjunction limits only
to agents and not to their secrets. This convention allows us to write more complex
statements, for instance that each agent is familiar only with her own secret.
This can be expressed as the formula
\begin{equation}
\bigwedge_{a \in \Agents}( F_a A \land \bigwedge_{b \in \Agents, b \neq a} \neg F_a B). \label{eq:init}
\end{equation}

Finally, consider the statement that for each agent, say $a$, it is
not the case that $a$ is an expert and each other agent is familiar
with at most her own secret and that of $a$.  This can be expressed as
the formula
\[
\bigwedge_{a \in \Agents} \neg (Exp_a \land \bigwedge_{b,c \in \Agents, \: |\{a, b,c\}| = 3}\neg F_b C).
\]
\HB
\end{example}

\smallskip

Next, we clarify the use of the knowledge
operators. In the presented reasoning we assume that the agents have
the knowledge of the underlying call type. In all cases we assume that
the initial situation is the one in which every agent is only familiar
with her own secret, that is, we assume \eqref{eq:init} to be true
for each agent before any communication takes place.  The examples
provide intuitions about how agents' knowledge is influenced by the
types of calls underpinning their communications. Such intuitions will
then be formalised in Section \ref{sec:semantics}.

\begin{example} \label{exa:new}
  Initially, each agent is familiar with her secret and each agent
  knows this fact. Additionally, she does not know that any other
  agent is familiar with a secret different from her own.  This can be
  expressed by means of the formula
\[
\bigwedge_{a \in \Agents} (\bigwedge_{b \in \Agents} 
K_a F_b B \land \bigwedge_{b,c \in \Agents, a \neq b, b \neq c} \neg K_a F_b C) 
\]
that holds initially, for all call types.
\HB
\end{example}

\begin{example} \label{exa:preliminary1}
Suppose there are four agents, $a, b, c$ and $d$.
Consider the call type is 
$(\myprivacy{1},\pushpull,\observance{after})$.
Assume the call sequence $ab, bc$. 

Let us reason from the perspective of agent $d$. Because of the
assumed privacy level, after the first call, $ab$, agent $d$ knows that
both agents $a$ and $b$ are familiar with $A$ and $B$. This can be
expressed as the formula
\[
K_d (F_a A \land F_a B \land F_b A \land F_b B).
\]
This then implies that after the second call, $bc$, agent $d$ also 
knows that both $b$ and $c$ are familiar with
$A, B$, and $C$. Agent's $d$ factual knowledge after the second call
can be expressed as the formula
$K_d \phi$, where 
\[
\phi = F_a A \land F_a B \land \bigwedge_{i \in \{b, c\}, \: j \in \{a, b, c\}} F_i J.
\]

In fact, because of the assumed privacy level $\myprivacy{1}$,
how the knowledge evolves during communication is completely
transparent to all agents. Hence after both calls everybody knows $\phi$, i.e., 
\[
\bigwedge_{a \in \Agents} K_a \phi.
\]
An analogous argument applies for the call type $(\myprivacy{1},\pushpull,\observance{before})$.

Suppose now that the privacy level is $\myprivacy{2}$.  Then we
cannot conclude the formula $K_d \phi$ after the second call, since agent $d$ only knows
then that two calls took place, but not between which pairs of
agents. In fact, in this case we can only conclude (note that the same call can be made twice):
\[
K_d (\bigvee_{a, b \in \Agents\setminus\{d\}, a \neq b} (F_a B \land F_b A)).
\]

Finally, if the privacy level is $\myprivacy{3}$, then $d$ is not
aware of the calls $ab$ and $bc$. She considers it possible that
$a,b,c$ are already familiar with all secrets except her own, but also
considers it possible that all other agents only know their own
secret. As she has not yet been involved in any call, she knows that
they are not familiar with $D$.  

So after the call sequence $ab, bc$ agent's $d$ knowledge can be expressed as
\[
K_d (\bigwedge_{e \in \Agents\setminus\{d\}} (F_e E \land \neg F_e D)).
\]
\HB
\end{example}

\begin{example} \label{exa:preliminary2}
Suppose there are three agents, $a, b$ and $c$.  Consider the two
call types $(\myprivacy{3},\pushpull,\otype)$, where $\otype \in \oset$,
and assume the call sequence
$ac, bc, ab$.
After it the agents $a$ and $b$ (and $c$ too) are 
familiar with all the secrets, which can be expressed as the
formula 
\[
\phi = Exp_a \land Exp_b,
\]
and both know this fact, which
can be expressed as $K_a \phi \land K_b \phi$. 

If the observance of the calls is \observance{before}, agent $a$ also
learns that prior to the call $ab$ agent $b$ was familiar with $a$'s
secret, i.e., with $A$. This allows $a$ to conclude that agent $b$ was
involved in a call with $c$ and hence agent $c$ is familiar with $B$.
We can express this as 
\[
K_a F_c B.
\]

Contrast the above with the situation when the observance is
$\observance{after}$.  Although again after the considered call
sequence both agents $a$ and $b$ are familiar with all the secrets,
now agent $a$ cannot conclude that agents $b$ and $c$
communicated. Hence agent $a$ does not know whether agent $c$ is
familiar with $B$, i.e., the formula $K_a F_c B$ is not true.

In both cases agent $c$ (who also is an expert) does not know that agents $a$ and $b$
communicated, so she does not know that they are experts. In other
words, the formula $K_c \phi$ is not true. This changes when the
privacy degree is \myprivacy{1}, i.e., in that case the formula $K_c \phi$ is
true. Moreover, because there are three agents, the same conclusion
holds when the privacy degree is \myprivacy{2}. However, the last
conclusion does not hold anymore when there are more than three
agents.
\HB
\end{example}



\begin{example} \label{exa:preliminary3}
Assume the same call sequence as in the previous example but suppose that the
call parameters are now $(\myprivacy{1}, \pull, \otype)$, where $\otype \in \oset$.
So we consider now the call sequence
$\cc = a \, \pull\, c, b \, \pull\, c, a \, \pull\, b$.

Because of the assumed privacy level, after this call sequence
agent $a$ knows that agent $b$ learned the secret $C$ and
agent $c$ knows that agent $a$ learned the secret $B$, i.e., the following holds after $\cc$
\[
K_a F_b C \land K_c F_a B.
\]

Suppose now the privacy degree is \myprivacy{2} and the observance is
$\beta$.  Then we only have $K_a F_b C$ as agent $a$ cannot
distinguish $\cc$ from
$a \, \pull\, c, c \, \pull\, b, a \, \pull\, b$. Clearly, $K_c F_a B$
does not hold after $\cc$ as agent $c$ cannot distinguish $\cc$ from
$a \, \pull\, c, c \, \pull\, b, b \, \pull\, a$.

Finally, if the privacy degree is \myprivacy{3} then
for the same reason $K_c F_a B$ does not hold after $\cc$ either.
%
\HB
\end{example}

We conclude that what the agents know after a call sequence crucially
depends on the parameters of the calls.  Further, the precise effect of a
single call on the agents' knowledge is very subtle, both for the
agents involved in it and for the outsiders.


\subsection{Semantics}
\label{sec:semantics}

We provide now a formal semantics for the modal language $\lang$.  
It is parameterized by a call type $\tau$.

\paragraph{Gossip situations and calls}

First we recall the following crucial notions introduced in
\cite{apt15epistemic}.  A \bfe{gossip situation} is a sequence
$\Situation = (\Atomsb_a)_{a \in \Agents}$, where
$\Atomsb_a \sse \Atoms$ for each agent $a$.  Intuitively, $\Atomsb_a$
is the set of secrets agent $a$ is familiar with in the situation
$\Situation$.  Given a gossip situation
$\Situation = (\Atomsb_a)_{a \in \Agents}$, we denote $\Atomsb_a$ by
$\Situation_a$.  The \bfe{initial gossip situation} is the one in
which each $\Atomsb_a$ equals ${\{A\}}$ and is denoted by $\init$ (for
``initial'').  The initial gossip situation reflects the fact that
initially each agent is familiar only with her own
secret. 



Each call transforms the current gossip situation by possibly
modifying the set of secrets the agents involved in the call are
familiar with. The definition depends solely on the direction of the
call.

\begin{definition} \label{def:effects} 

  The application of a call $\Call$ to a gossip situation $\Situation$
  is defined as follows, where
  $\Situation := (\Atomsb_a)_{a \in \Agents}$:
\begin{description}

\item[\fbox{$\Call = ab$}] $\Call(\Situation) = (Q'_a)_{a \in \Agents}$, where 
$\Atomsb'_a = \Atomsb'_b = \Atomsb_a \cup \Atomsb_b$,
$\Atomsb'_c = \Atomsb_c$, for $c \neq a,b$.

\item[\fbox{$\Call = a \push b$}] $ \Call(\Situation) =(Q'_a)_{a \in \Agents}$, where
$\Atomsb'_b = \Atomsb_a \cup \Atomsb_b$,
$\Atomsb'_a = \Atomsb_a$,
$\Atomsb'_c = \Atomsb_c$, for $c \neq a,b$.

\item[\fbox{$\Call = a\, \pull \, b$}] $\Call(\Situation) = (Q'_a)_{a \in \Agents}$, where
$\Atomsb'_a = \Atomsb_a \cup \Atomsb_b$,
$\Atomsb'_b = \Atomsb_b$,
$\Atomsb'_c = \Atomsb_c$, for $c \neq a,b$.
\end{description}

\end{definition}

This definition captures the meaning of the direction type: for $ab$
the secrets are shared between the caller and callee , for
$a \push b$ they are pushed from the caller to the callee, and for
$a\, \pull\, b$ they are retrieved by the caller from the callee.  Note
that $(a \pushpull b)(\Situation) = (b \pushpull a)(\Situation)$
and $(a \push b)(\Situation) = (b\, \pull\, a)(\Situation)$, as
expected.

In turn, the privacy degree of a call captures what outsiders of the
call learn from it and the observance level determines
informally what caller and callee can learn about each other's calling
history. The meaning of these two parameters will be determined by
means of the appropriate equivalence relations between call
sequences. 


A \bfe{call sequence} is a {\em finite} sequence of calls, all of the same
call type.  The empty sequence is denoted by $\epsilon$.  We use
$\CSequence$ to denote a call sequence and $\CSequences^{\calltype}$
to denote the set of all call sequences of call type $\calltype$.  
Given the call sequence $\CSequence$ and a call $\Call$, $\CSequence.\Call$ denotes the sequence obtained by appending $\CSequence$ with $\Call$.



The result of applying a call sequence $\CSequence$ to a situation
$\Situation$ is defined by induction using Definition
\ref{def:effects}, as follows
\begin{description}
\item{[Base]} $\epsilon(\Situation) := \Situation$,

\item{[Step]} $\CSequence.\Call(\Situation) := \Call(\CSequence(\Situation))$.  
\end{description}
Note that this definition does not depend on the privacy degree and observance of the calls.

\begin{example}
Let  $\Agents$ be $\set{a,b,c}$.  We use the following
concise notation for gossip situations. Sets of secrets will be
written down as lists. E.g., the set $\set{A, B, C}$ will be written
as $ABC$. Gossip situations will be written down as lists of lists of
secrets separated by dots. E.g., $\init = A.B.C$ and the gossip
situation $(\set{A,B}, \set{A,B}, \set{C}$) will be written as
$AB.AB.C$. So, $(ab)(A.B.C) = AB.AB.C$,
$(ab, ca)(A.B.C) = ABC.AB.ABC$ and $(ab, ca, ab)(A.B.C) = ABC.ABC.ABC$.

%
\HB
\end{example}





\paragraph{Truth of formulas}


We illustrated in Examples
\ref{exa:preliminary1}--\ref{exa:preliminary3} that each call has an
effect on the knowledge of the agents. After a sequence of calls took
place the agents may be uncertain about the current gossip situation
because they do not know which call sequence actually took place. This
leads to appropriate indistinguishability relations that allow us to
reason about the knowledge of the agents. This
is in a nutshell the basis of the approach to epistemic gossip protocols
put forth in \cite{apt15epistemic}, and upon which we build here.

To clarify matters consider the situation analyzed in Example
\ref{exa:preliminary2}. We noticed there that depending on the assumed
observance level the knowledge of agent $a$ differs. This has to do
with the call sequences the agent considers possible.  If the call type is
$(\myprivacy{3},\pushpull,\alpha)$ agent $a$ cannot distinguish
between the call sequences $ac, ab$ and $ac, bc, ab$.  Indeed, after
both sequences she is familiar with all the secrets but she cannot
determine whether agents $b$ and $c$ communicated. From her
perspective both call sequences are possible, that is, she
cannot distinguish between them. In contrast, if the call type is
$(\myprivacy{3},\pushpull,\beta)$ agent $a$ can distinguish between
these two call sequences, which has in turn an effect on her knowledge.

In general, to determine what agents know after a call sequence we
need then to consider an appropriate equivalence relation between the
call sequences.  Let $\CSequence$ and $\CSequenced$ be two call
sequences of call type $\calltype$ and $a$ an agent. The statement
$\CSequence \sim^\calltype_a \CSequenced$ 
informally says that agent
$a$ cannot distinguish between $\CSequence$ and $\CSequenced$.  The
definition of $\sim^\calltype_a$ crucially depends on the call type
$\calltype$ and is provided in the next subsection. Here we assume
that it is given and proceed to define the truth of the formulas of
the language $\lang$ with respect to a \bfe{gossip model} (for a given
set of agents $\Agents$)
$\Model^\calltype = (\CSequences^\calltype, \set{\sim^\calltype_a}_{a
  \in \Agents})$ and a call sequence $\CSequence$ as follows:


\begin{definition}\label{def:truth}
  Let $\Model^\calltype$ be a gossip model for a call type $\calltype$
  and a set of agents $\Agents$, and let
  $\CSequence \in \CSequences^{\calltype}$.  The truth relation for
  language $\lang$ is inductively defined as follows (with Boolean
  connectives omitted):
\begin{eqnarray*}
  (\Model^\calltype, \CSequence) \models  F_a S & \mbox{iff} & S \in \CSequence(\init)_a, \\
  (\Model^\calltype, \CSequence) \models  K_a \phi &  \mbox{iff}  & \forall \CSequenced \in \CSequences^{\calltype} \mbox{ such that } \CSequence \sim^{\calltype}_a \CSequenced, ~ (\Model^\calltype, \CSequenced) \models \phi.
\end{eqnarray*}
Since the gossip model is clear from the context, we will from now on
write $\CSequence \models^\calltype \phi$ for
$(\Model^\calltype, \CSequence) \models \phi$.  We also write
$\Model^\calltype \models \phi$ ($\phi$ is valid in
$\Model^\calltype$) if for all
$\CSequence \in \CSequences^{\calltype}$ we have
$\Model^\calltype, \CSequence \models \phi$.
\end{definition}

So the formula $F_a S$ is true after a sequence of calls $\CSequence$
whenever agent $a$ is familiar with the secret $S$ in the gossip
situation generated by $\CSequence$ applied to the initial gossip
situation $\init$. The knowledge operator $K_a$ is interpreted as
is customary in the multimodal $S5_n$ logic (see, e.g.,
\cite{meyer95epistemic,ditmarsch07dynamic}), so using the equivalence relation
$\sim^{\calltype}_a$.

It is important to notice that to determine the truth of a propositional formula
(so in particular to determine which secrets an agent is familiar with) only 
the direction parameter of the type of the calls is used. In contrast, to 
determine the truth of formulas involving the knowledge operator all three parameters
of the call type are needed, through the definition of the $\sim^{\calltype}_a$
relations, to which we turn next.


\section{Indistinguishability of Call Sequences} \label{sec:relations}

Below we say that an agent $a$ is \bfe{involved} in a call $\Call$,
and write $a \in \Call$, if $a$ is one of the two agents involved in
it, i.e., if it is either a caller or a callee in $\Call$.  So agent
$a$ is involved but not affected (a notion introduced in Section
\ref{sec:preliminaries}) by a call $\Call$ if $\Call = a \push b$ or
$\Call = b\, \pull \, a$  for some agent $b$.


\subsection{The $\sim^{\calltype}_a$ relations}

For every call type $\calltype$ and agent $a$ we define the
indistinguishability relation
$\sim^\calltype_a \subseteq \CSequences^{\calltype} \times
\CSequences^{\calltype}$ in two steps.  First we define the auxiliary
relation $\approx^\calltype_a$ (Definition
\ref{def:aux}). Intuitively, the expression
$\CSequence \approx^\calltype_a \CSequenced$ can be interpreted as
``from the point of view of $a$, if $\CSequence$ is an (epistemically)
possible call sequence, so is $\CSequenced$, and vice versa''.  Then, we define
$\sim^\calltype_a$ as the least equivalence relation that contains
$\approx^\calltype_a$.

\begin{definition} \label{def:aux}
Let $a \in \Agents$ and fix a type $\calltype$. The relation $\approx^\calltype_a$ is the smallest subset of $\CSequences^{\calltype} \times
\CSequences^{\calltype}$ satisfying the following conditions:
\begin{description}
\item{[Base]} $\epsilon \approx^{\calltype}_a \epsilon$.

\item{[Step]} Suppose that $\CSequence \approx^\calltype_a \CSequenced$ and let $\Call$ and $\Calld$ be calls. 
\[
\begin{array}{ll}
\text{[Step-out}^\calltype\text{\!]}  & \mbox{if } \outcond{\calltype}{a}{out}\mbox{ then }\myconclout{\calltype}{a}, \\
\text{[Step-in}^\calltype\text{\!]}  & \mbox{if } \mycond{\calltype}{a}\mbox{ then }\myconclin{\calltype}{a}, \\
\end{array}
\]
\end{description}
where the used relations are defined in Table \ref{table:een}.
($b$ is there the partner of $a$ in the call $\Call$.)
\end{definition}
\begin{table}[hbt]
\begin{center}
Agent $a$ is not involved in the last call:
\vspace*{0.1cm}

 \begin{tabular}{|| c | c | c  ||} 
 \hline
 $\calltype(\prtype)$ & $\outcond{\calltype}{a}{out}$        & $\myconclout{\calltype}{a}$  \\ [0.7ex] 
 \hline\hline
 $\myprivacy{1}$      &$ a \not\in \Call$                    & $\CSequence.\Call \approx^\calltype_a \CSequenced .\Call$ \\ [0.5ex]
 \hline
$\myprivacy{2}$       &$ a \not\in \Call, a \not\in \Calld$  & $\CSequence.\Call \approx^\calltype_a \CSequenced .\Calld$ \\ [0.5ex]\hline
$\myprivacy{3}$       &$ a \not\in \Call$                    & $\CSequence.\Call \approx^\calltype_a \CSequenced$, $\CSequence \approx^\calltype_a \CSequenced.\Call$ \\ [0.5ex]\hline
 \end{tabular}
%

\bigskip

Agent $a$ is involved in but not affected by the last call:
\vspace*{0.1cm}

 \begin{tabular}{|| c | c ||} 
 \hline
 $\mycond{\calltype}{a}$ & $\myconclin{\calltype}{a}$  \\ [0.5ex] 
 \hline\hline
 $\Call \in \{a \push b, b\, \pull \, a\}$  & $\CSequence.\Call \approx^{\tau}_a \CSequenced.\Call$ \\ [0.5ex]\hline
 \end{tabular}
%

\bigskip

Agent $a$ is involved in and affected by the last call:
\vspace*{0.1cm}

 \begin{tabular}{||c | c | c ||} 
 \hline
  $\calltype(\otype)$ & $\mycond{\calltype}{a}$ & $\myconclin{\calltype}{a}$  \\ [0.5ex] 
 \hline\hline
 $\observance{after}$ & $\Call \in \{a \pushpull b, b \pushpull a, b \push a, a\, \pull \, b\}$,
 & $\CSequence.\Call \approx^{\tau}_a \CSequenced.\Call$ \\
 &  $\CSequence.\Call(\init)_a = \CSequenced.\Call(\init)_a$ &  \\[0.5ex]\hline
 $ \observance{before}$ & $\Call \in \{a \pushpull b, b \pushpull a, b \push a, a\, \pull \, b\}$,
 & $\CSequence.\Call \approx^{\tau}_a \CSequenced.\Call$ \\
 & $\CSequence(\init)_b = \CSequenced(\init)_b$ & \\ [0.5ex]\hline
 \end{tabular}
%
\end{center}

\caption{Defining indistinguishability of call sequences}\label{table:een}\label{table:twee}\label{table:drie}
\end{table}






The definition of $\approx^\calltype_a$ captures the complex effect of
each of the three parameters of a call type on the knowledge of an
agent. Let us discuss it now in detail.

The Base condition is clear. 
Consider now the $\text{Step-out}^\calltype$ clause which refers to
Table \ref{table:een}, top.  Suppose that
$\CSequence \approx^\calltype_a \CSequenced$.  Consider first the
privacy type $\myprivacy{1}$. According to its informal description the
condition $a \not\in \Call$ means that agent $a$ is not involved in
the call $\Call$ but knows who calls whom.  The conclusion
$\CSequence.\Call \approx^\calltype_a \CSequenced .\Call$ then
coincides with this intuition.

Consider now the privacy type $\myprivacy{2}$.  The conditions
$a \not\in \Call$ and $a \not\in \Calld$ mean that agent $a$ is not
involved in the calls $\Call$ and $\Calld$, thus according to the
informal description of $\myprivacy{2}$ she cannot distinguish between
these two calls.  This explains the conclusion
$\CSequence.\Call \approx^\calltype_a \CSequenced .\Calld$.  Note
that this conclusion is not justified for the privacy type
$\myprivacy{1}$ because if $\Call \neq \Calld$ then agent $a$ can
distinguish between these two calls, so a fortiori between the call
sequences $\CSequence.\Call$ and $\CSequenced .\Calld$.

Finally, consider the privacy type $\myprivacy{3}$.  According to its
informal description, the condition $a \not\in \Call$ means that agent
$a$ is not aware of the call $\Call$.  This justifies the conclusions
$\CSequence.\Call \approx^\calltype_a \CSequenced$ and
$\CSequence \approx^\calltype_a \CSequenced.\Call$.

Next, consider the $\text{Step-in}^\calltype$ clause.  It spells the
conditions that allow one to extend the $\approx^\calltype_a$
relation in case agent $a$ is involved in the last call, $\Call$.
Table \ref{table:twee}, middle, formalises the intuition that when
agent $a$ is not affected by the call $\Call$, then we can conclude
that $\CSequence.\Call \approx^\calltype_a \CSequenced .\Call$.

Table \ref{table:drie}, bottom, focuses on the remaining case.  Consider first
the observance $\alpha$. According to its informal description,
affected agents incorporate the secrets of their partner with their own
secrets and then inspect the result.  So we check what secrets agent
$a$ is familiar with after the call sequences $\CSequence$ and
$\CSequenced$ are both extended by $\Call$.  If these sets are equal,
then we can conclude
that $\CSequence.\Call \approx^\calltype_a \CSequenced .\Call$.

In the case the observance is $\beta$, the informal description
stipulates that the agent inspects the set of secrets of the call
partner before incorporating them with their own secrets.  So we
compare these sets of secrets after, respectively, the call sequences
$\CSequence$ and $\CSequenced$ took place.  If these sets are equal,
then we conclude that
$\CSequence.\Call \approx^\calltype_a \CSequenced .\Call$.  This
explains why in this case a reference to agent $b$ is made in
$\mycond{\calltype}{a}$.


\subsection{Examples and a useful observation}

\begin{example} \label{exa:reconsidered}
  We first illustrate Table \ref{table:een}, top, by analyzing
  situations in which the considered agent is not involved in the last
  call.  Assume four agents, $a, b, c$ and $d$. 
  
  Suppose that the privacy of $\tau$ is $\myprivacy{1}$. We have
  $ab, bc \not\sim^{\tau}_a ab, cd$, because
  $ab, bc \not\approx^{\tau}_a ab, cd$ as $bc \neq cd$ and
  $bc \neq dc$. So we fail to apply Table \ref{table:een}, top, first
  row and the transitive reflexive closure does not give us that either.

  On the other hand, if the privacy of $\tau$ is $\myprivacy{2}$, we
  have $ab, bc \sim^{\tau}_a ab, cd$, because
  $ab, bc \approx^{\tau}_a ab, cd$, as $a \not\in bc$, $a \not\in cd$
  and $ab \sim^{\tau}_a ab$ (Table \ref{table:een}, top, second row).
  On the other hand, $ab, bc \not\sim^{\tau}_a ab, cd, bc$ as now the
  clause in the second row fails to apply, as the lengths of the
  compared sequences are different.

  Finally, if the privacy of $\tau$ is $\myprivacy{3}$, we of course
  also have $ab, bc \sim^{\tau}_a ab, cd$ for the same reason as in
  the previous paragraph, but we now also have
  $ab, bc \sim^{\tau}_a ab, cd, bc$, because
  $ab, bc \approx^{\tau}_a ab, cd, bc$. Indeed, we have
  $ab \sim^{\tau}_a ab$ and hence by Table \ref{table:een}, top, third
  row, applied three times, first $ab \sim^{\tau}_a ab, cd$, then
  $ab, bc \sim^{\tau}_a ab, cd$, and finally
  $ab, bc \sim^{\tau}_a ab, cd, bc$.  \HB
\end{example}

\begin{example} \label{exa:fiets} To illustrate Table \ref{table:een},
  middle, consider the same four agents and sequence
  $d \push c, b \push c$, and $\myprivacy{2}$. Then
  $d \push c, b \push c \sim^{\tau}_b c \push d, b \push c$, because
  agent $b$ is involved in the second call but not affected (Table
  \ref{table:een}, middle), and $d \push c \sim^{\tau}_b c \push d$,
  because $b \not \in d \push c$ and $b \not \in c \push d$ (Table
  \ref{table:een}, top, second row).  \HB
\end{example}

\begin{example} \label{exa:zwem} Now consider Table \ref{table:een},
  bottom. The difference between observances $\alpha$ and $\beta$ is
  seen in Example \ref{exa:preliminary3}. For the observancy $\alpha$
  we have that $a \, \pull\, c, b \, \pull\, c, a \, \pull\, b$
  $\sim^\tau_a$ $a \, \pull\, c, c \, \pull\, b, a \, \pull\, b$,
  because agent $a$ is afterwards familiar with the same
  secrets on the lefthand side and the righthand side, namely
  $A,B,C$ (Table \ref{table:een}, bottom, first row). On the other
  hand, for the observancy $\beta$ we get
  $a \, \pull\, c, b \, \pull\, c, a \, \pull\, b \not\sim^\tau_a a \,
  \pull\, c, c \, \pull\, b, a \, \pull\, b$, because
  $a \, \pull\, c,$
  $b \, \pull\, c \not\sim^\tau_b a \, \pull\, c, c \, \pull\, b$
  (note that this concerns indistinguishability for agent $b$, not
  $a$); here the second row of Table \ref{table:een}, bottom, applies.
  
  As a final example,  we have that
  $d \push c, b \push c \not\sim^{\tau}_c c \push d, b \push c$, because
 $c$ is involved but not affected in call $c \push d$, while it is involved and affected in call $d \push c$.
 Observe that after $d \push c, b \push c$ agent $c$ is
  familiar with the secrets $B,C,D$, whereas after $c \push d, b \push c$
  agent $c$ is only familar with $B,C$. 
\HB
\end{example}

Let us focus now on some properties of the $\sim^{\tau}_a$ equivalence relations.

\begin{note} \label{not:1}
 For all agents $a$ and call types $\tau$
\[
\sim^{\tau}_a = (\approx^{\tau}_a)^*,  
\]
where $^*$ is the transitive, reflexive closure operation on binary
relations.  
\end{note}
\begin{proof}
A straightforward proof by induction show 
that each $\approx^{\tau}_a$ relation is symmetric. This implies the claim.
\HB
\end{proof}

The following observation will be needed later.

\begin{proposition} \label{prop:a}
For all call types $\tau$ if
$\CSequence \sim^{\tau}_a \CSequenced$, then
$\CSequence(\init)_a =  \CSequenced(\init)_a$.
\end{proposition}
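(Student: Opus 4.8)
The plan is to pass to the auxiliary relation $\approx^{\tau}_a$ via Note~\ref{not:1} and then argue by induction on the inductive definition of $\approx^{\tau}_a$. Concretely, let $R_a = \{(\CSequence,\CSequenced) \mid \CSequence(\init)_a = \CSequenced(\init)_a\}$. This is an equivalence relation, hence in particular reflexive and transitive, so once we show $\approx^{\tau}_a \subseteq R_a$ we get $(\approx^{\tau}_a)^{*} \subseteq R_a$, and by Note~\ref{not:1} this is exactly $\sim^{\tau}_a \subseteq R_a$, which is the claim. The only fact about calls we need, read off directly from Definition~\ref{def:effects}, is the following: if $a\notin\Call$, or $a$ is involved but not affected by $\Call$ (i.e.\ $\Call\in\{a\push b, b\pull a\}$), then $\Call(\Situation)_a = \Situation_a$; and if $a$ is affected by $\Call$ with partner $b$ (i.e.\ $\Call\in\{a\pushpull b, b\pushpull a, b\push a, a\pull b\}$), then $\Call(\Situation)_a = \Situation_a \cup \Situation_b$.

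For the base case, $\epsilon \approx^{\tau}_a \epsilon$ gives $\epsilon(\init)_a = \init_a = \epsilon(\init)_a$, so $(\epsilon,\epsilon)\in R_a$. For the step, assume $\CSequence \approx^{\tau}_a \CSequenced$ with $\CSequence(\init)_a = \CSequenced(\init)_a$ (the induction hypothesis) and go through the clauses of Table~\ref{table:een}. In the top block, for each of $\myprivacy{1}$, $\myprivacy{2}$, $\myprivacy{3}$ the side conditions guarantee $a$ is not involved in the appended call(s), so by the fact above appending them does not change $a$'s secret set on either side; hence each resulting pair lies in $R_a$ by the induction hypothesis. In the middle block, $\Call\in\{a\push b, b\pull a\}$, so again $\Call(\cdot)_a$ is the identity on $a$'s secrets on both sides and the induction hypothesis applies. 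In the bottom block for observance $\alpha$, the side condition is literally $\CSequence.\Call(\init)_a = \CSequenced.\Call(\init)_a$, i.e.\ the pair is in $R_a$ outright. For observance $\beta$, the fact above gives $(\CSequence.\Call)(\init)_a = \CSequence(\init)_a \cup \CSequence(\init)_b$ and $(\CSequenced.\Call)(\init)_a = \CSequenced(\init)_a \cup \CSequenced(\init)_b$; the induction hypothesis yields $\CSequence(\init)_a = \CSequenced(\init)_a$ and the side condition of that row yields $\CSequence(\init)_b = \CSequenced(\init)_b$, so the two unions coincide and the pair is in $R_a$. This exhausts the clauses, so $\approx^{\tau}_a \subseteq R_a$ and we are done.

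I do not expect a genuine obstacle here: the argument is a routine structural induction and essentially amounts to bookkeeping over the rows of Table~\ref{table:een}. The one place that requires a little care is the $\beta$ row, where one must combine two inputs, the induction hypothesis (for $a$'s own secrets) and the row's side condition on the partner $b$'s secrets; and, more prosaically, one must not forget to invoke Note~\ref{not:1} to reduce from $\sim^{\tau}_a$ to $\approx^{\tau}_a$, exploiting that $R_a$ is already transitive and reflexive so that taking the closure is harmless.
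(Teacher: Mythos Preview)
Your proposal is correct and follows essentially the same approach as the paper: reduce to $\approx^{\tau}_a$ via Note~\ref{not:1}, then induct and go through the rows of Table~\ref{table:een}, with the $\beta$ case combining the induction hypothesis on $a$ with the side condition on the partner $b$. The only cosmetic difference is that you phrase the induction structurally on the derivation of $\approx^{\tau}_a$ and package the target as an equivalence relation $R_a$ (making the passage to the closure explicit), whereas the paper inducts on the sum $|\CSequence|+|\CSequenced|$; the content is the same.
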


\begin{proof}
By Note \ref{not:1} it is sufficient to prove the conclusion
under the assumption that $\CSequence \approx^{\tau}_a \CSequenced$.

We proceed by induction on the sum $k$ of the lengths $|\CSequence| + |\CSequenced|$
of both sequences. If $k = 0$, then
$\CSequence = \CSequenced = \epsilon$, so the claim holds.  Suppose
the claim holds for all pairs of sequences such that the sum of their
lengths is $< k$ and that $k > 0$, $|\CSequence| + |\CSequenced| = k$ and
$\CSequence \approx^{\tau}_a \CSequenced$.  By definition
$\approx^{\tau}_a$ is the smallest relation satisfying the Base and
Step conditions of Definition \ref{def:aux}.  Let $\Call$ be the last
call of $\CSequence$ or of $\CSequenced$ if $\CSequence$ is empty.

If agent $a$ is not involved in $\Call$, then four cases arise,
depending on the form of $\CSequence$ and $\CSequenced$. We consider
one representative case, when $\CSequence$ is of the form
$\CSequence'.\Call$, where $\CSequence' \approx^{\tau}_a \CSequenced$.
Then by the assumption about $\Call$ and the induction hypothesis
\[
\CSequence(\init)_a =
\CSequence'.\Call(\init)_a = \CSequence'(\init)_a = \CSequenced(\init)_a. 
\]

If agent $a$ is involved in but not affected by the last call, then
$\CSequence$ is of the form $\CSequence'.\Call$, $\CSequenced$ is of
the form $\CSequenced'.\Call$,
$\Call \in \{a \push b, b\, \pull \, a\}$ and
$\CSequence' \approx^{\tau}_a \CSequenced'$.  Then by the form of
$\Call$ and the induction hypothesis
\[
\CSequence(\init)_a =
\CSequence'.\Call(\init)_a = \CSequence'(\init)_a = \CSequenced'(\init)_a = 
\CSequenced'.\Call(\init)_a  = \CSequenced(\init)_a.
\]

Finally, if agent $a$ is involved in and affected by the last call,
then $\CSequence$ is of the form $\CSequence'.\Call$, $\CSequenced$ is
of the form $\CSequenced'.\Call$,
$\Call \in \{a \pushpull b, b \pushpull a, b \push a, a\, \pull \,
b\}$ and $\CSequence' \approx^{\tau}_a \CSequenced'$.

If $\tau(\otype) = \alpha$, then by assumption
$\CSequence'.\Call(\init)_a = \CSequenced'.\Call(\init)_a$, i.e.,
$\CSequence(\init)_a = \CSequenced(\init)_a$.  If
$\tau(\otype) = \beta$, then by assumption
$\CSequence'(\init)_b = \CSequenced'(\init)_b$.  Also, by the
induction hypothesis $\CSequence'(\init)_a = \CSequenced'(\init)_a$,
so by the form of $\Call$
\[
\CSequence(\init)_a =
\CSequence'.\Call(\init)_a =  \CSequence'(\init)_a \cup  \CSequence'(\init)_b = 
\CSequenced'(\init)_a \cup  \CSequenced'(\init)_b = \CSequenced'.\Call(\init)_a =
\CSequenced(\init)_a.
\]
\HB
\end{proof}

\begin{corollary} \label{cor:Ka} For all call types $\tau$, agents
  $a, b$ and call sequences $\CSequence$
\[
\mbox{$\CSequence \models^\calltype K_a F_a B$ iff $\CSequence \models^\calltype F_a B$.}
\]
\end{corollary}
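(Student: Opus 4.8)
The plan is to unfold the definition of $\models^\calltype$ for $K_a$ and reduce the claim to Proposition~\ref{prop:a}. For the right-to-left direction: assume $\CSequence \models^\calltype F_a B$, i.e.\ $B \in \CSequence(\init)_a$. To show $\CSequence \models^\calltype K_a F_a B$, take an arbitrary $\CSequenced \in \CSequences^\calltype$ with $\CSequence \sim^\calltype_a \CSequenced$. By Proposition~\ref{prop:a} we have $\CSequenced(\init)_a = \CSequence(\init)_a$, hence $B \in \CSequenced(\init)_a$, so $\CSequenced \models^\calltype F_a B$. Since $\CSequenced$ was arbitrary, $\CSequence \models^\calltype K_a F_a B$.

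For the left-to-right direction: assume $\CSequence \models^\calltype K_a F_a B$. Since $\sim^\calltype_a$ is reflexive (it is an equivalence relation by construction, being the reflexive-transitive closure of $\approx^\calltype_a$, cf.\ Note~\ref{not:1}), we have $\CSequence \sim^\calltype_a \CSequence$, and therefore the universally quantified statement defining $K_a F_a B$ instantiated at $\CSequenced = \CSequence$ yields $\CSequence \models^\calltype F_a B$. This direction in fact holds for any formula $\phi$ in place of $F_a B$, by reflexivity of $\sim^\calltype_a$; only the converse uses the specific shape $F_a B$ via Proposition~\ref{prop:a}.

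There is no real obstacle here: the only ingredients are reflexivity of $\sim^\calltype_a$ (for one direction) and Proposition~\ref{prop:a}, which says precisely that indistinguishable sequences agree on the set of secrets agent $a$ holds (for the other). The mild subtlety worth a sentence is that Proposition~\ref{prop:a} is stated for $\approx^\calltype_a$ and extended to $\sim^\calltype_a = (\approx^\calltype_a)^*$ by transitivity --- but that extension is already carried out inside the proof of Proposition~\ref{prop:a} itself (via Note~\ref{not:1}), so we may cite it directly.
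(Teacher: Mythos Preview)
Your proof is correct and follows exactly the paper's approach: the paper's proof is the one-liner ``By Proposition~\ref{prop:a} and the definition of truth of $K_a F_a B$ and $F_a B$,'' and you have simply unpacked what that means, using reflexivity of $\sim^\calltype_a$ for one direction and Proposition~\ref{prop:a} for the other. One small wording correction: Proposition~\ref{prop:a} is in fact \emph{stated} for $\sim^\calltype_a$ (not $\approx^\calltype_a$), so you may cite it without any caveat.
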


\begin{proof}
By Proposition \ref{prop:a} and
the definition of truth of $K_a F_a B$ and $F_a B$.
\qed
\end{proof}


\section{Classification of the $\sim^\calltype_a$ Relations}
\label{sec:comparison}

We introduced in the previous section 18 equivalence relations
$\sim^\calltype_a$, each parame\-trised by an agent $a$. The uniform
presentation makes it possible to compare these relations by means of
a classification, which we now provide.

First, let us introduce some notation.  Given two call types $\tau_1$
and $\tau_2$ we abbreviate the statement
$\forall a \in \Agents, \sim^{\tau_1}_a \subset \sim^{\tau_2}_a$ to
$\tau_1 \subset \tau_2$ and similarly for $\tau_1 \sse \tau_2$ and
$\tau_1 = \tau_2$.  Such statements presuppose that we systematically
change the types of all calls in the considered call sequences.

The following theorem provides the announced classification.  It
clarifies in total 153 (= $\frac{18 \cdot 17}{2}$) relationships between the
equivalence relations.

\begin{theorem} \label{thm:classification} 
The $\sim^\calltype_a$
  equivalence relations form preorders presented in Figures
  \ref{fig:1} and \ref{fig:2}.  An arrow $\to$ from $\tau_1$ to
  $\tau_2$ stands here for $\tau_1 \subset \tau_2$,
  $(\myprivacy{1}, \dirtype, \otype)$ for the set of six call types
  with the privacy degree $\myprivacy{1}$ that are all equal, and
  $(\myprivacy{2}, \pushpull, \otype)$ for the set
  $\{(\myprivacy{2}, \pushpull, \alpha), (\myprivacy{2}, \pushpull,
  \beta)\}$.

\end{theorem}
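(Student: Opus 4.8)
The plan is to establish the classification by proving a manageable list of atomic facts and then arguing that these facts, together with transitivity of $\subseteq$, force exactly the preorder structure depicted in Figures \ref{fig:1} and \ref{fig:2}. Concretely, I would organize the argument around three kinds of claims: (i) \emph{collapses} --- that all six privacy-$\myprivacy{1}$ types coincide, and that $(\myprivacy{2},\pushpull,\alpha) = (\myprivacy{2},\pushpull,\beta)$; (ii) \emph{inclusions} --- each arrow $\tau_1 \to \tau_2$ in the figures, to be shown by verifying that every defining clause of $\approx^{\tau_1}_a$ in Table \ref{table:een} is subsumed by a clause of $\approx^{\tau_2}_a$, hence $\approx^{\tau_1}_a \subseteq \approx^{\tau_2}_a$ and so, taking transitive-reflexive closures and invoking Note \ref{not:1}, $\sim^{\tau_1}_a \subseteq \sim^{\tau_2}_a$; and (iii) \emph{strictness and incomparability} --- for each claimed $\subset$ a witnessing pair of call sequences distinguished by one agent under $\tau_2$ but not under $\tau_1$, and for each pair of types with no path between them, witnesses in both directions.

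For the collapses in (i), the key observation is that under privacy $\myprivacy{1}$ the $\text{Step-out}^\calltype$ clause forces $\CSequence.\Call \approx^\calltype_a \CSequenced.\Call$ whenever $\CSequence \approx^\calltype_a \CSequenced$ and $a\notin\Call$, i.e. outsiders always track the exact call; an induction on sequence length then shows $\CSequence \sim^{(\myprivacy{1},\dirtype,\otype)}_a \CSequenced$ holds iff $\CSequence$ and $\CSequenced$ agree on every call in which $a$ is involved and produce the same secret sets for $a$ at every prefix --- a condition that does not mention the observance parameter at all, because whenever $a$ \emph{is} affected the two sides must already end with the same call $\Call$ with $\CSequence' \approx_a \CSequenced'$, so the $\alpha$ vs.\ $\beta$ side-condition is automatically met by the induction hypothesis via Proposition \ref{prop:a}. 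Similarly, for $(\myprivacy{2},\pushpull,\otype)$ one checks that with only push-pull calls, whenever the affected-agent clause could be invoked the compared sequences already end with the same call on sides related for $a$, and $\CSequence'(\init)_b = \CSequenced'(\init)_b$ follows because the $\myprivacy{2}$ step-out clause together with $\pushpull$-only calls keeps $b$'s history in lockstep with $a$'s view; so again $\alpha$ and $\beta$ cannot be told apart. The inclusions in (ii) are then essentially syntactic clause-chasing: e.g. $\myprivacy{1} \to \myprivacy{2}$ because the $\myprivacy{1}$ step-out clause is the $\myprivacy{2}$ clause with $\Call = \Calld$; $\myprivacy{2} \to \myprivacy{3}$ because identifying $\CSequence.\Call$ with $\CSequenced.\Calld$ is weaker than identifying $\CSequence.\Call$ with $\CSequenced$ and $\CSequence$ with $\CSequenced.\Call$ composed; and $\beta \to \alpha$ because, by Proposition \ref{prop:a} applied to $\CSequence' \approx_a \CSequenced'$, equality of $b$'s secret sets plus equality of $a$'s secret sets gives equality of $a$'s post-call secret sets, so the $\beta$ side-condition implies the $\alpha$ side-condition.

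The main obstacle will be part (iii): producing the witnessing call sequences that simultaneously show every claimed strict inclusion is strict \emph{and} that no extra arrows hold, i.e. verifying that the Hasse-type diagrams in Figures \ref{fig:1}--\ref{fig:2} are complete. Many of the needed separating examples are already in hand --- Example \ref{exa:preliminary1} separates $\myprivacy{1}$ from $\myprivacy{2}$ from $\myprivacy{3}$, Examples \ref{exa:preliminary2}--\ref{exa:preliminary3} and \ref{exa:zwem} separate $\alpha$ from $\beta$, and Example \ref{exa:fiets}/\ref{exa:zwem} illustrate the push vs.\ push-pull vs.\ pull distinctions --- but I would need to systematize this into a table assigning, to each non-arrow in the diagram, an explicit pair $(\CSequence,\CSequenced)$ and an agent $a$ with $\CSequence \sim^{\tau_1}_a \CSequenced$ but $\CSequence \not\sim^{\tau_2}_a \CSequenced$, typically over three or four agents, using Proposition \ref{prop:a} (secret sets must differ, or some reachable formula must differ) to certify non-indistinguishability and a short explicit $\approx$-derivation to certify indistinguishability. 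Since the figures are not reproduced in this excerpt I would present the proof as: (a) the collapse lemmas; (b) a single lemma listing all inclusions with the one-line clause-subsumption justification for each; (c) a lemma listing all the separating witnesses; and (d) the remark that (b) and (c) together with transitivity exhaust all $153$ ordered pairs, hence pin down the preorder exactly as drawn.
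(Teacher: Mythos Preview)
Your overall decomposition into collapses, inclusions, and separating witnesses is exactly how the paper proceeds (Lemmas \ref{pro:1}, \ref{pro:comparison}, \ref{pro:2}). However, there is a genuine gap in part (i).

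You assert that $(\myprivacy{2},\pushpull,\alpha) = (\myprivacy{2},\pushpull,\beta)$ unconditionally, but this is false for $|\Agents| > 3$. The argument you sketch --- that under $\myprivacy{2}$ with $\pushpull$-only calls ``$b$'s history is kept in lockstep with $a$'s view'' --- does not go through: under $\myprivacy{2}$, agent $a$ knows only that \emph{some} call occurred, not which one, so $b$ may participate in calls that $a$ cannot identify and $b$'s secret set can diverge between the two sequences. Concretely, with four agents take $\CSequence = ab,\,ac,\,bc,\,ab$ and $\CSequenced = ab,\,ac,\,cd,\,ab$: then $\CSequence(\init)_a = \CSequenced(\init)_a = \{A,B,C\}$ so the $\alpha$-condition for the last call is met, but $(ab,ac,bc)(\init)_b = \{A,B,C\} \neq \{A,B\} = (ab,ac,cd)(\init)_b$, so the $\beta$-condition fails. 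The collapse $(\myprivacy{2},\pushpull,\alpha) = (\myprivacy{2},\pushpull,\beta)$ holds \emph{only} when $|\Agents| = 3$, and this is precisely why the theorem distinguishes Figures \ref{fig:1} and \ref{fig:2}. You have missed this bifurcation entirely; it also produces extra arrows in the $|\Agents|=3$ case (e.g.\ $(\myprivacy{2},\push,\otype) \subset (\myprivacy{2},\pushpull,\otype')$, Lemma \ref{pro:comparison}(iv)) that are absent for $|\Agents|>3$.

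A smaller point: your characterisation of $\sim^{(\myprivacy{1},\dirtype,\otype)}_a$ as ``agree on every call in which $a$ is involved and produce the same secret sets for $a$ at every prefix'' is wrong. Under $\myprivacy{1}$ every defining clause of $\approx_a$ appends the \emph{same} call to both sides, so a trivial induction gives that $\sim^{(\myprivacy{1},\dirtype,\otype)}_a$ is the identity relation (Lemma \ref{pro:1}(i)). This is what makes the six-fold collapse immediate and independent of both $\dirtype$ and $\otype$; your more elaborate description would still depend on the direction through the secret-set condition.
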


\begin{figure}[htb]

\begin{center}
\begin{tikzpicture}[scale = 0.5]

\node (1) at (2,0) {$(\myprivacy{1}, \dirtype, \otype)$};
\node (2) at (2,2) {$(\myprivacy{2}, \pushpull, \otype)$};
\node (3) at (2,4) {$(\myprivacy{3}, \pushpull, \beta)$};
\node (4) at (2,6) {$(\myprivacy{3}, \pushpull, \alpha)$};

  \draw[->,thick] (1) -- (2);
  \draw[->,thick] (2) -- (3);
  \draw[->,thick] (3) -- (4);

\node (12) at (-5,2) {$(\myprivacy{2}, \push, \beta)$};
\node (13) at (-7,4) {$(\myprivacy{3}, \push, \beta)$};
\node (14) at (-3,4) {$(\myprivacy{2}, \push, \alpha)$};
\node (15) at (-5,6) {$(\myprivacy{3}, \push, \alpha)$};

  \draw[->,thick] (1) -- (12);
  \draw[->,thick] (12) -- (2);
  \draw[->,thick] (14) -- (2);
  \draw[->,thick] (12) -- (13);
  \draw[->,thick] (12) -- (14);
  \draw[->,thick] (13) -- (15);
  \draw[->,thick] (14) -- (15);

\node (22) at (9,2) {$(\myprivacy{2}, \pull, \beta)$};
\node (23) at (7,4) {$(\myprivacy{3}, \pull, \beta)$};
\node (24) at (11,4) {$(\myprivacy{2}, \pull, \alpha)$};
\node (25) at (9,6) {$(\myprivacy{3}, \pull, \alpha)$};

  \draw[->,thick] (1) -- (22);
  \draw[->,thick] (22) -- (2);
  \draw[->,thick] (24) -- (2);
  \draw[->,thick] (22) -- (23);
  \draw[->,thick] (22) -- (24);
  \draw[->,thick] (23) -- (25);
  \draw[->,thick] (24) -- (25);

\end{tikzpicture}
\caption{Classification of the $\sim^\calltype_a$ relations when $|\Agents| = 3$.\label{fig:1}}
\end{center}
\end{figure}

\begin{figure}[htb]

\begin{center}
\begin{tikzpicture}[scale = 0.5]

\node (1) at (2,0) {$(\myprivacy{1}, \dirtype, \otype)$};
\node (2) at (2,2) {$(\myprivacy{2}, \pushpull, \beta)$};
\node (3) at (0,4) {$(\myprivacy{3}, \pushpull, \beta)$};
\node (4) at (4,4) {$(\myprivacy{2}, \pushpull, \alpha)$};
\node (5) at (2,6) {$(\myprivacy{3}, \pushpull, \alpha)$};

  \draw[->,thick] (1) -- (2);
  \draw[->,thick] (2) -- (3);
  \draw[->,thick] (2) -- (4);
  \draw[->,thick] (3) -- (5);
  \draw[->,thick] (4) -- (5);

\node (12) at (-5,2) {$(\myprivacy{2}, \push, \beta)$};
\node (13) at (-7,4) {$(\myprivacy{3}, \push, \beta)$};
\node (14) at (-3,4) {$(\myprivacy{2}, \push, \alpha)$};
\node (15) at (-5,6) {$(\myprivacy{3}, \push, \alpha)$};

  \draw[->,thick] (1) -- (12);
  \draw[->,thick] (12) -- (13);
  \draw[->,thick] (12) -- (14);
  \draw[->,thick] (13) -- (15);
  \draw[->,thick] (14) -- (15);

\node (22) at (9,2) {$(\myprivacy{2}, \pull, \beta)$};
\node (23) at (7,4) {$(\myprivacy{3}, \pull, \beta)$};
\node (24) at (11,4) {$(\myprivacy{2}, \pull, \alpha)$};
\node (25) at (9,6) {$(\myprivacy{3}, \pull, \alpha)$};

  \draw[->,thick] (1) -- (22);
  \draw[->,thick] (22) -- (23);
  \draw[->,thick] (22) -- (24);
  \draw[->,thick] (23) -- (25);
  \draw[->,thick] (24) -- (25);

\end{tikzpicture}
\caption{Classification of the $\sim^\calltype_a$ relations
 when $|\Agents| > 3$.\label{fig:2}}
\end{center}
\end{figure}

The remainder of this section is devoted to the proof of Theorem
\ref{thm:classification}.  Below we say that the call types $\tau_1$
and $\tau_2$ are \bfe{incomparable} when neither $\tau_1 \sse \tau_2$
nor $\tau_2 \sse \tau_1$ holds.  The proofs concerning the
incomparability that are established below also hold for a stronger
definition, namely that $\tau_1$ and $\tau_2$ are incomparable when
for all agents $a$ neither $\sim^{\tau_1}_a \sse \sim^{\tau_2}_a$ nor
$\sim^{\tau_2}_a \sse \sim^{\tau_1}_a$ holds.  This way Figures
\ref{fig:1} and \ref{fig:2} can be alternatively interpreted as
preorders on the $\sim^{\tau}_a$ equivalence relations, for any agent
$a$, where an arrow $\to$ from $\tau_1$ to $\tau_2$ stands then for
$\sim^{\tau_1}_a \subset \sim^{\tau_2}_a$.


We first establish the claimed equalities between the call types.

\begin{lemma} \label{pro:1}
\mbox{} 
\begin{enumerate}[(i)]

\item 
Suppose that $\tau(\prtype) = \myprivacy{1}$. Then each
$\sim^{\tau}_a$ is the identity relation.

\item Suppose that $\tau_1(\prtype) = \tau_2(\prtype) = \myprivacy{1}$. 
Then $\tau_1 = \tau_2$. 

  \item If $|\Agents| = 3$ then 
$(\myprivacy{2}, \pushpull, \beta) = (\myprivacy{2}, \pushpull, \alpha)$.

\end{enumerate}
\end{lemma}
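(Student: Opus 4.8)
The three parts are of increasing subtlety, so I would handle them in order, reusing earlier parts.

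For part (i), the plan is to show by induction on the sum of lengths $|\CSequence|+|\CSequenced|$ that $\CSequence \approx^\tau_a \CSequenced$ implies $\CSequence = \CSequenced$; the claim for $\sim^\tau_a$ then follows by Note \ref{not:1}, since the transitive-reflexive closure of the identity is the identity. In the inductive step I inspect which clause of Definition \ref{def:aux} produced the pair. When $a$ is not involved in the last call, the only applicable row of Table \ref{table:een} for $\myprivacy{1}$ is the first, whose conclusion is $\CSequence'.\Call \approx^\tau_a \CSequenced'.\Call$ from $\CSequence' \approx^\tau_a \CSequenced'$, and the induction hypothesis gives $\CSequence' = \CSequenced'$, hence equality of the extended sequences. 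The cases where $a$ is involved (affected or not) are analogous: in each the conclusion glues the \emph{same} call $\Call$ onto both sides, so the induction hypothesis closes the case. The point worth stating explicitly is that for $\myprivacy{1}$ there is simply no rule that ever relates sequences differing in a call or in length, which is exactly the content of the informal ``no privacy'' reading.

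For part (ii), once (i) tells us that for any $\tau$ with $\tau(\prtype)=\myprivacy{1}$ the relation $\sim^\tau_a$ is the identity on $\CSequences^\tau$, and since the underlying set $\CSequences^\tau$ of call sequences does not depend on the direction or observance components (the set of ``calls of a given type'' is the same finite set of agent-pairs-with-decorations regardless), all six such relations coincide. So $\tau_1 = \tau_2$ in the sense defined just before the theorem. This part is essentially a one-line corollary of (i); the only thing to be careful about is the convention that $\tau_1 = \tau_2$ abbreviates $\forall a,\ \sim^{\tau_1}_a = \sim^{\tau_2}_a$ after uniformly relabelling call types, which is immediate here.

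Part (iii) is the real work. With $|\Agents|=3$ I must show, for $\tau_1=(\myprivacy{2},\pushpull,\beta)$ and $\tau_2=(\myprivacy{2},\pushpull,\alpha)$, that $\sim^{\tau_1}_a = \sim^{\tau_2}_a$ for every agent $a$. The inclusion $\sim^{\tau_1}_a \sse \sim^{\tau_2}_a$ should follow from the general comparison machinery (the $\beta$ relation being finer than $\alpha$, since $\CSequence(\init)_b = \CSequenced(\init)_b$ together with the induction hypothesis $\CSequence(\init)_a = \CSequenced(\init)_a$ yields $\CSequence.\Call(\init)_a = \CSequenced.\Call(\init)_a$, so every $\approx^{\tau_1}_a$-step is an $\approx^{\tau_2}_a$-step). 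The hard direction is $\sim^{\tau_2}_a \sse \sim^{\tau_1}_a$: I need that, when only three agents are present, the extra discriminating power of $\beta$ — seeing the partner's secret set \emph{before} the merge — gives nothing beyond what $\alpha$ already gives. The key observation is that with $|\Agents|=3$, in a $\pushpull$ call between $a$ and $b$ the outsider set is the singleton $\{c\}$, and the affected agent $a$ already knows her own pre-call secret set (by Proposition \ref{prop:a}, the induction hypothesis), so knowing $\CSequence.\Call(\init)_a = \CSequence(\init)_a \cup \CSequence(\init)_b$ pins down $\CSequence(\init)_b \setminus \CSequence(\init)_a$; the only ambiguity is whether $b$ already held secrets that $a$ also held. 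I expect to argue that this residual ambiguity is harmless: since there are only three agents, the secrets $a$ holds before the call are $\{A\}$ possibly together with $B$ and/or $C$, and any call history consistent with the $\alpha$-data can be matched, on $b$'s side, by a history giving $b$ exactly the same pre-call set, because the only secrets in play are $A,B,C$. Concretely, I would prove by induction on $|\CSequence|+|\CSequenced|$ that $\CSequence \approx^{\tau_2}_a \CSequenced$ implies the existence of an $\approx^{\tau_1}_a$-chain between them, massaging the witnessing sequences in the one case (affected $\pushpull$ call) where the two Step-in rows differ. \emph{The main obstacle} is precisely this case: I must show that whenever the $\alpha$-condition $\CSequence.\Call(\init)_a = \CSequenced.\Call(\init)_a$ holds but the $\beta$-condition $\CSequence(\init)_b = \CSequenced(\init)_b$ fails, one can nonetheless interpolate finitely many sequences, related stepwise by $\approx^{\tau_1}_a$, connecting $\CSequence.\Call$ to $\CSequenced.\Call$ — using the $\myprivacy{2}$ Step-out freedom to permute or substitute the outsider-only calls (here calls not involving $a$, i.e. involving at most $b$ and $c$) so as to normalise $b$'s secret set without changing $a$'s knowledge, and crucially invoking $|\Agents| = 3$ to guarantee there is always such a normalising call available. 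I would then remark that this argument breaks when $|\Agents| > 3$, which is why the two relations split in Figure \ref{fig:2}.
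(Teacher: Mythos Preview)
Your plans for parts $(i)$ and $(ii)$ are exactly the paper's proof: induction on $|\CSequence|+|\CSequenced|$ to show $\approx^\tau_a$ is the identity when $\tau(\prtype)=\myprivacy{1}$, then invoke Note~\ref{not:1}; part $(ii)$ is an immediate corollary.

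For part $(iii)$, however, you are making the problem much harder than it is, and in doing so you miss the one observation that makes the proof a two-liner. With $\Agents=\{a,b,c\}$ and direction $\pushpull$, the only calls not involving $a$ are $b\pushpull c$ and $c\pushpull b$, and these two calls have \emph{identical} effect on every gossip situation (Definition~\ref{def:effects}). Now, for privacy $\myprivacy{2}$, every clause of Definition~\ref{def:aux} appends one call to each side, and the Step-out clause is the only place where the two appended calls may differ; but there they must both be non-$a$ calls, hence both lie in $\{b\pushpull c,\, c\pushpull b\}$. A trivial induction therefore gives $\CSequence(\init)=\CSequenced(\init)$ (the full situation, not just $a$'s component) whenever $\CSequence\approx^\tau_a\CSequenced$, for either $\tau$. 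In particular $\CSequence(\init)_b=\CSequenced(\init)_b$ always holds, so the $\beta$-condition in the Step-in row is automatically satisfied whenever the $\alpha$-condition is, and vice versa. The two $\approx^\tau_a$ relations therefore coincide, and so do their closures.

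The ``main obstacle'' you flag---an $\alpha$-step where the $\beta$-condition fails---is thus vacuous in the three-agent $\pushpull$ setting. Your interpolation scheme is not wrong in spirit, but it is solving a case that never occurs. The paper's proof records exactly this: $\CSequence\sim^\tau_a\CSequenced$ iff the two sequences differ only at positions where $a$ is absent, and there the only possible calls are $b\pushpull c$ and $c\pushpull b$, which have the same effect regardless of observance.
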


\begin{proof}
\mbox{}

\NI 
$(i)$ By Note \ref{not:1} it is sufficient to prove that
$\CSequence \approx^{\tau}_a \CSequenced$ implies
$\CSequence = \CSequenced$.  We proceed by induction on the sum $k$ of
the lengths $|\CSequence| + |\CSequenced|$ of both sequences. If
$k = 0$, then $\CSequence = \CSequenced = \epsilon$, so the claim
holds.  Suppose the claim holds for all pairs of sequences such that
the sum of their lengths is $< k$ and that $k > 0$,
$|\CSequence| + |\CSequenced| = k$ and
$\CSequence \approx^{\tau}_a \CSequenced$.

  By definition $\approx^{\tau}_a$ is the smallest relation
  satisfying the Base and Step conditions of Definition \ref{def:aux}.
  So, since $\tau(\prtype) = \myprivacy{1}$, by the Step condition
  $\CSequence$ is of the form $\CSequence'.\Call$ and $\CSequenced$ is
  of the form $\CSequenced'.\Call$, where
  $\CSequence' \approx^{\tau}_a \CSequenced'$.  By the induction
  hypothesis $\CSequence' = \CSequenced'$, so $\CSequence = \CSequenced$.  

\smallskip

\noindent
$(ii)$ 
By $(i)$.
\smallskip

\NI 
$(iii)$ Suppose $\Agents = \{a, b, c\}$.  Take
$\tau \in \{(\myprivacy{2}, \pushpull, \beta),
(\myprivacy{2},\pushpull, \alpha)\}$.  Then by Definition
\ref{def:aux} $\CSequence \sim_a^{\tau} \CSequenced$ iff $\CSequence$
and $\CSequenced$ differ only in some of the calls $a$ is not involved
in.  Because there are exactly 3 agents, each such call must be
$b \pushpull c$ or $c \pushpull b$ and both have the same effect
independently of the type of observance.
\HB
\end{proof}

Next we establish the claimed strict inclusions.  Below the
unspecified parameters are implicitly universally qualified.  For
example,
$(\myprivacy{1}, \dirtype, \otype) \subset (\myprivacy{2}, \dirtype,
\otype)$ is an abbreviation for the statement
\[
  \forall a \in \Agents \ \forall \dirtype \in \dirset \ \forall \otype \in \oset \sim^{(\myprivacy{1}, \dirtype,
    \otype)}_a \ \subset \ \sim^{(\myprivacy{2}, \dirtype, \otype)}_a.
\]

\begin{lemma} \label{pro:comparison} 
\mbox{} 
  \begin{enumerate}[(i)]

  \item 
    $(\myprivacy{1}, \dirtype, \otype) \subset (\myprivacy{2},
    \dirtype, \otype)$.

  \item $(\myprivacy{2}, \dirtype, \otype) \subset (\myprivacy{3}, \dirtype, \otype)$. 

  \item If $|\Agents| > 3$ or $\dirtype \neq \pushpull$ then
$(\myprivacy{2}, \dirtype, \beta) \subset (\myprivacy{2}, \dirtype, \alpha)$.

\item 
If $|\Agents| = 3$, $\dirtype \neq \pushpull$ and  $\otype_1, \otype_2 \in \oset$,
then $(\myprivacy{2}, \dirtype, \otype_1) \subset (\myprivacy{2}, \pushpull, \otype_2)$.

  \item $(\myprivacy{3}, \dirtype, \beta) \subset (\myprivacy{3}, \dirtype, \alpha)$. 

  \end{enumerate}
\end{lemma}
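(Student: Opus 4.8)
The plan is to prove each strict inclusion in two halves: the inclusion $\sse$, and then strictness by exhibiting a pair of call sequences that one relation separates and the other does not. For the $\sse$ half the uniform method is to show, by induction on $|\CSequence| + |\CSequenced|$ exactly as in Proposition~\ref{prop:a}, that $\approx^{\tau_1}_a \sse \approx^{\tau_2}_a$ after the obvious relabelling of call types (and, for item $(iv)$, after replacing every call $x\push y$ or $x\pull y$ by $x\pushpull y$); by Note~\ref{not:1} this lifts to $\sim^{\tau_1}_a \sse \sim^{\tau_2}_a$. In the inductive step one cases on the last clause of Definition~\ref{def:aux} applied; the Base clause and the ``involved but not affected'' clause are common to $\tau_1$ and $\tau_2$, so only the Step-out clause and the affected-call clause require work. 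For $(i)$, the $\myprivacy{1}$ Step-out conclusion $\CSequence.\Call \approx_a \CSequenced.\Call$ is the case $\Calld=\Call$ of the $\myprivacy{2}$ Step-out clause, giving $\sse$; strictness is immediate from Lemma~\ref{pro:1}$(i)$, since $\sim^{\tau_1}_a$ is the identity whereas under $\myprivacy{2}$ the two distinct one-call sequences $x\pushpull y$ and $y\pushpull x$ (with $a\notin\{x,y\}$, available as $|\Agents|\ge 3$) are $\approx^{\tau_2}_a$-related. For $(ii)$, each $\myprivacy{2}$ Step-out conclusion $\CSequence.\Call \approx_a \CSequenced.\Calld$ is obtained by two applications of the $\myprivacy{3}$ Step-out clause (first append $\Call$ on the left, then $\Calld$ on the right), so $\sse$ holds; strictness follows because every $\myprivacy{2}$ clause preserves $|\CSequence|-|\CSequenced|$, so $\sim^{\tau_1}_a$ relates only equally long sequences, while under $\myprivacy{3}$ one has $\epsilon \sim_a \Call$ for any $\Call$ with $a\notin\Call$.

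For $(iii)$ and $(v)$ the privacy is fixed and $\tau_1,\tau_2$ differ only in observance, $\beta$ versus $\alpha$. The only clause that differs is the affected-call clause, and its $\beta$-version is subsumed by its $\alpha$-version: if $\CSequence \approx^{\tau_1}_a \CSequenced$ with $\CSequence(\init)_b = \CSequenced(\init)_b$, then the induction hypothesis gives $\CSequence \approx^{\tau_2}_a \CSequenced$ and Proposition~\ref{prop:a} gives $\CSequence(\init)_a = \CSequenced(\init)_a$, whence $\CSequence.\Call(\init)_a = \CSequence(\init)_a\cup\CSequence(\init)_b = \CSequenced(\init)_a\cup\CSequenced(\init)_b = \CSequenced.\Call(\init)_a$, which is the $\alpha$-side condition. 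So $\sse$ holds with no restriction; in $(iii)$ the side hypothesis ($|\Agents|>3$ or $\dirtype\neq\pushpull$) is needed only for strictness, the excluded case $|\Agents|=3,\dirtype=\pushpull$ being an equality by Lemma~\ref{pro:1}$(iii)$. For strictness the separating phenomenon is that under $\beta$ an affected agent sees her partner's exact secret set at the moment of the call, so $\sim^{\tau_1}_a$ preserves the ``$\beta$-history'' of $a$, i.e.\ the sequence of partner-sets recorded at $a$'s affected calls, whereas under $\alpha$ only their union with $a$'s own current set is observed (and that union is not reconstructible once $a$ already holds some of those secrets); checking that this quantity is indeed invariant under all the defining clauses is routine, since trailing outsider calls and ``involved but not affected'' calls add no $a$-affected call. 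This yields explicit witnesses: for $\dirtype=\push$ the pair $b\push a, b\push c, c\push a$ versus $b\push a, c\push b, c\push a$ for agent $a$ with three agents works for both $(iii)$ and $(v)$ (they are $\approx^{\alpha}_a$-related directly, but have $\beta$-histories $(\{B\},\{B,C\})$ and $(\{B\},\{C\})$); $\dirtype=\pull$ is symmetric; for $(v)$, $\dirtype=\pushpull$, the pair $ac,ab$ versus $ac,bc,ab$ of Examples~\ref{exa:preliminary2}--\ref{exa:preliminary3}; and for $(iii)$, $\dirtype=\pushpull$, the four-agent pair $ac,bc,ab$ versus $ac,cd,ab$ (which needs $|\Agents|>3$; it is $\approx^{\alpha}_a$-related but has $\beta$-histories $(\{C\},\{A,B,C\})$ and $(\{C\},\{B\})$).

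For $(iv)$ the extra ingredient is an explicit description of the coarse relation: when $|\Agents|=3$, $\CSequence^{\bullet} \sim^{(\myprivacy{2},\pushpull,\otype_2)}_a \CSequenced^{\bullet}$ holds whenever the two sequences have equal length, agree on every call in which $a$ is involved, and at all remaining positions both carry a call between the two agents $\neq a$ --- and this holds independently of $\otype_2$, because with three agents the two ``outsider'' calls $bc$ and $cb$ have the same $\pushpull$-effect, so any two such sequences reach the same gossip situation at every step and all $\alpha/\beta$-side conditions are met automatically when building up the relation. On the other hand, any two $(\myprivacy{2},\dirtype,\otype_1)$-sequences ($\dirtype\neq\pushpull$) that are $\sim_a$-related already agree on all $a$-involving calls and differ only in outsider calls (every $\myprivacy{2}$ clause forces matching of $a$-involving calls and preserves length), so their $\pushpull$-images satisfy the description above; this gives $\sse$. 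Strictness is witnessed, for $\dirtype=\push$, by $b\push c, c\push a$ versus $c\push b, c\push a$ seen by $a$: their $\pushpull$-images reach the same situations and hence are indistinguishable to $a$, whereas the originals leave $a$ with $\{A,B,C\}$ in one case and $\{A,C\}$ in the other, so Proposition~\ref{prop:a} already separates them; $\dirtype=\pull$ is symmetric.

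I expect the main obstacle to be the strictness halves, on two counts: choosing witness sequences that genuinely work for each of the three direction types --- in particular the four-agent construction forced in the $\pushpull$ case of $(iii)$ --- and, more delicately, proving that a chosen witness is really not related in the finer relation, which in each case reduces to pinning down the right invariant that $\sim$ respects (equal length for $(ii)$; the secret set $\CSequence(\init)_a$ via Proposition~\ref{prop:a} throughout; the ``$\beta$-history'' of partner-sets for $(iii)$ and $(v)$) and verifying it distinguishes the two sequences. The $\sse$ halves are, by contrast, routine clause-by-clause inductions, the only non-obvious moves being the double application of the $\myprivacy{3}$ Step-out clause in $(ii)$ and the appeal to Proposition~\ref{prop:a} to discharge the $\alpha$-side condition in $(iii)$ and $(v)$.
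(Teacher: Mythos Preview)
Your proposal is correct and follows the same overall architecture as the paper: inclusions by clause-by-clause induction on $|\CSequence|+|\CSequenced|$ (with Note~\ref{not:1} to pass from $\approx$ to $\sim$, and Proposition~\ref{prop:a} to discharge the $\alpha$-side condition in $(iii)$ and $(v)$), and strictness by explicit witnesses. Your witnesses differ in places from the paper's but are equally valid; your observation that a single three-agent $\push$-pair serves for both $(iii)$ and $(v)$ is slightly tidier than the paper's choice of distinct witnesses.

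One point where your write-up goes beyond the paper: for the strictness halves of $(iii)$ and $(v)$ the paper simply asserts ``while $\CSequence \not\sim_a^{\tau_1} \CSequenced$'' without argument, whereas you isolate the right invariant (the ``$\beta$-history'' of partner secret-sets at $a$'s affected calls) and check that every defining clause of $\approx^{\tau_1}_a$ preserves it. This is a genuine improvement in rigour, since non-relatedness under a smallest-relation definition does require exhibiting such an invariant. A tiny slip: in your witness for $(i)$ you write $x\pushpull y$ versus $y\pushpull x$, but the direction type $\dirtype$ is arbitrary there; the paper's generic $bc$ versus $cb$ is what you mean.
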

\begin{proof}
First we establish the $\sse$ inclusions.
\smallskip

\NI
$(i)$ and $(ii)$ 
These are direct consequences of Definition \ref{def:aux}.
\smallskip

\NI $(iii)$ We prove that
$(\myprivacy{2}, \dirtype, \beta) \sse (\myprivacy{2}, \dirtype,
\alpha)$ always holds.  Let
$\tau_1 = (\myprivacy{2}, \dirtype, \beta)$ and
$\tau_2 = (\myprivacy{2}, \dirtype, \alpha)$.  Fix an agent $a$.

By Note \ref{not:1} it is sufficient to prove that
$\CSequence \approx^{\tau_1}_a \CSequenced$ implies
$\CSequence \approx^{\tau_2}_a \CSequenced$. We proceed by induction
on the sum $k$ of the lengths $|\CSequence| + |\CSequenced|$ of both
sequences. If $k = 0$, then $\CSequence = \CSequenced = \epsilon$, so
the claim holds.  Suppose the claim holds for all pairs of sequences
such that the sum of their lengths is $< k$ and that $k > 0$,
$|\CSequence| + |\CSequenced| = k$ and
$\CSequence \approx^{\tau_1}_a \CSequenced$.  By definition
$\approx^{\tau_1}_a$ is the smallest relation satisfying the Base and
Step conditions of Definition \ref{def:aux}.  So, since
$\tau_1(\prtype) = \myprivacy{2}$, by the Step condition $\CSequence$
is of the form $\CSequence'.\Call$ and $\CSequenced$ is of the form
$\CSequenced'.\Calld$, where
$\CSequence' \approx^{\tau_1}_a \CSequenced'$.  By the induction
hypothesis $\CSequence' \approx^{\tau_2}_a \CSequenced'$.

Three cases arise that reflect the case analysis in Definition \ref{def:aux},
where $b$ is the partner of $a$ in the call $\Call$:

\begin{enumerate}[(a)]

\item $a \not\in \Call, a \not\in \Calld$.

Then $\CSequence' \approx^{\tau_2}_a \CSequenced'$ implies
$\CSequence'.\Call \approx^{\tau_2}_a \CSequenced'.\Calld$.

\item  $\Call \in \{a \push b, b\, \pull \, a\}$.

Then $\CSequence'.\Call \approx^{\tau_1}_a \CSequenced'.\Calld$ implies
$\Call = \Calld$ and consequently $\CSequence' \approx^{\tau_2}_a \CSequenced'$
implies $\CSequence'.\Call \approx^{\tau_2}_a \CSequenced'.\Calld$.

\item $\Call \in \{a \pushpull b, b \pushpull a, b \push a, a\, \pull \, b\}$.

  Then $\CSequence'.\Call \approx^{\tau_1}_a \CSequenced'.\Calld$
  implies $\Call = \Calld$ and
  $\CSequence'(\init)_b = \CSequenced'(\init)_b$, because
  $\tau_1(\otype) = \beta$.  Also by Proposition \ref{prop:a}
  $\CSequence'(\init)_a = \CSequenced'(\init)_a$, so
\[
\CSequence'.\Call(\init)_a =  \CSequence'(\init)_a \cup  \CSequence'(\init)_b = 
\CSequenced'(\init)_a \cup  \CSequenced'(\init)_b = \CSequenced'.\Call(\init)_a.
\]

Hence $\CSequence' \approx^{\tau_2}_a \CSequenced'$ implies 
$\CSequence'.\Call \approx^{\tau_2}_a \CSequenced'.\Calld$, because $\tau_2(\otype) = \alpha$. 
\end{enumerate}

\NI
$(iv)$
Let $\tau_1 = (\myprivacy{2}, \dirtype, \otype_1)$ and
$\tau_2 = (\myprivacy{2}, \pushpull, \otype_2)$.  Fix an agent $a$.

By Note \ref{not:1} it suffices to prove that
$\CSequence \approx^{\tau_1}_a \CSequenced$ implies
$\CSequence \approx^{\tau_2}_a \CSequenced$.  Two cases arise.
\begin{enumerate}[(a)]
\item  $\dirtype = \push$.

  Because there are only 3 agents, by Definition \ref{def:aux} if
  $\CSequence \approx^{\tau_1}_a \CSequenced$ then $\CSequence$ and
  $\CSequenced$ differ only in some of the calls $a$ is not involved
  in.  So then $\CSequence$ and $\CSequenced$, when interpreted under
  $\tau_2$, differ only in some of the calls between $b$ and $c$, which
  are $b \pushpull c$ or $c \pushpull b$, and both have the same effect
  independently of the type of observance.

So $\CSequence \sim_a^{\tau_1} \CSequenced$ implies
$\CSequence \sim_a^{\tau_2} \CSequenced$.

\item  $\dirtype = \pull$.

The argument is the same as in (a). 
\end{enumerate}

\NI
$(v)$ The proof is analogous to the one given in $(iii)$ and is omitted.
\smallskip

We prove now that the inclusions are strict.
Suppose that $a, b, c \in \Agents$ are different agents.
\smallskip

\NI
$(i)$
Note that for
$\tau_1 = (\myprivacy{1}, \dirtype, \otype)$ and $\tau_2 = (\myprivacy{2}, \dirtype, \otype)$
we have $bc \sim_a^{\tau_2} cb$, while $bc \not\sim_a^{\tau_1} cb$.



\smallskip

\NI
$(ii)$ Note that for $\tau_1 = (\myprivacy{2}, \dirtype, \otype)$ and
$\tau_2 = (\myprivacy{3}, \dirtype, \otype)$ we have
$bc \sim_a^{\tau_2} \epsilon$, while
$bc \not \sim_a^{\tau_1} \epsilon$.
\smallskip

\NI
$(iii)$ 
Let $\tau_1 = (\myprivacy{2}, \dirtype, \beta)$ and
$\tau_2 = (\myprivacy{2}, \dirtype, \alpha)$. 
Assume first that
$|\Agents| > 3$. Suppose $a, b, c, d \in \Agents$  are different agents.
Three cases arise.

\begin{enumerate}[(a)]
\item  $\dirtype = \pushpull$.

  Then
  $a \pushpull b, a \pushpull c, b \pushpull c, a \pushpull b
  \sim_a^{\tau_2} a \pushpull b, a \pushpull c, c \pushpull d, a
  \pushpull b$, while
  $a \pushpull b, a \pushpull c, b \pushpull c, a \pushpull b \not
  \sim_a^{\tau_1} a \pushpull b, a \pushpull c, c \pushpull d, a
  \pushpull b$.

\item  $\dirtype = \push$.

Then
$c \push a, b \push c, b \push a \sim_a^{\tau_2} c \push a, c \push b,
b \push a$, while
$c \push a, b \push c, b \push a \not \sim_a^{\tau_1} c \push a, c
\push b, b \push a$.

\item  $\dirtype = \pull$.

  $a\, \pull\, c, c\, \pull\, b, a\, \pull\, b \sim_a^{\tau_2} a\,
  \pull\, c, b\, \pull\, c, a\, \pull\, b$, while
  $a\, \pull\, c, c\, \pull\, b, a\, \pull\, b \not \sim_a^{\tau_1}
  a\, \pull\, c, b\, \pull\, c, a\, \pull\, b$.

\end{enumerate}

Assume now that $\dirtype \neq \pushpull$.  Then the desired
conclusion is established in (b) and (c), as the examples used there
involve only three agents.

\smallskip

\NI
$(iv)$ Let $\tau_1 = (\myprivacy{2}, \dirtype, \otype_1)$ and
$\tau_2 = (\myprivacy{2}, \pushpull, \otype_2)$. Assume
$\Agents = \{a, b, c\}$.
Two cases arise.
\begin{enumerate}[(a)]
\item  $\dirtype = \push$.


Note that
$c \pushpull b, c \pushpull a \sim_a^{\tau_2} b \pushpull c, c
\pushpull a$, while
$c \push b, c \push a \not \sim_a^{\tau_1} b \push c, c \push a$.

\item  $\dirtype = \pull$.


Note that
$b \pushpull c, a \pushpull c \sim_a^{\tau_2} c \pushpull b, a \pushpull c$, while
$b\, \pull\, c, a\, \pull\, c \not \sim_a^{\tau_1} c\, \pull\, b, a\, \pull\, c$.

\end{enumerate}

\NI
$(v)$ 
Let $\tau_1 = (\myprivacy{3}, \dirtype, \alpha)$
and $\tau_2 = (\myprivacy{3}, \dirtype, \beta)$.
Three cases arise.

\begin{enumerate}[(a)]
\item  $\dirtype = \pushpull$.

  Note that
  $a \pushpull c, a \pushpull b \sim_a^{\tau_1} a \pushpull c, b
  \pushpull c, a \pushpull b$, while
  $a \pushpull c, a \pushpull b \not\sim_a^{\tau_2} a \pushpull c, b
  \pushpull c, a \pushpull b$.

\item  $\dirtype = \push$.

Note that
  $c \push a, b \push a \sim_a^{\tau_1} c \push a, c \push b, b \push
  a $, while
  $c \push a, b \push a \not \sim_a^{\tau_2} c \push a, c \push b, b
  \push a$.

\item  $\dirtype = \pull$.

Note that
  $a\, \pull\, c, a\, \pull\, b \sim_a^{\tau_1} a\, \pull\, c, b \,
  \pull\, c, a\, \pull\, b$, while
  $a\, \pull\, c, a\, \pull\, b \not\sim_a^{\tau_2} a\, \pull\, c, b
  \, \pull\, c, a\, \pull\, b$.
\end{enumerate}
\HB
\end{proof}

As a side remark note that Lemmas \ref{pro:1}$(ii), (iii)$ and
\ref{pro:comparison}$(iii), (v)$ imply that
\[
(\prtype, \dirtype, \beta) \sse (\prtype, \dirtype, \alpha).
\]

Finally, we establish the claims concerning incomparability of the types.

\begin{lemma} \label{pro:2} 
Let $\dirtype, \dirtype_1, \dirtype_2 \in \dirset$ and $\otype_1, \otype_2 \in \oset$.
  \begin{enumerate}[(i)]

  \item Suppose that $|\Agents| > 3$ or
$\pushpull \not\in \{\dirtype_1, \dirtype_2\}$, and
$\dirtype_1 \neq \dirtype_2$.  Then
$(\myprivacy{2}, \dirtype_1, \otype_1)$ and
$(\myprivacy{2}, \dirtype_2, \otype_2)$ are incomparable.

\item 
Suppose that $\dirtype_1 \neq \dirtype_2$. Then
$(\myprivacy{3}, \dirtype_1, \otype_1)$ and
$(\myprivacy{3}, \dirtype_2, \otype_2)$ are incomparable.

\item 
Suppose that $|\Agents| = 3$ and $\dirtype \neq \pushpull$. 
Then $(\myprivacy{2}, \pushpull, \alpha)$ and
$(\myprivacy{3}, \dirtype, \alpha)$ are incomparable.

\item 
Suppose that $|\Agents| > 3$ or $\pushpull \not\in \{\dirtype_1, \dirtype_2\}$,
and $\dirtype_1 \neq \dirtype_2$. 
Then $(\myprivacy{2}, \dirtype_1, \beta)$ and
$(\myprivacy{3}, \dirtype_2, \alpha)$ are incomparable.

\item 
Suppose that $|\Agents| > 3$ or $\dirtype \neq \pushpull$.  Then
  $(\myprivacy{2}, \dirtype, \alpha)$ and
  $(\myprivacy{3}, \dirtype, \beta)$ are incomparable.

  \end{enumerate}
\end{lemma}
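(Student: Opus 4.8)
Here is how I would prove Lemma~\ref{pro:2}.

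\smallskip

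The plan is, for each pair $(\tau_1,\tau_2)$ asserted to be incomparable, to display two witness pairs of call sequences: one pair $(\cc,\dd)$ with $\cc \sim^{\tau_1}_a \dd$ but $\cc \not\sim^{\tau_2}_a \dd$ (refuting $\tau_1 \sse \tau_2$), and a second pair refuting $\tau_2 \sse \tau_1$. Positive relatedness will always be verified by an explicit, short $\approx^{\tau}_a$-derivation from Definition~\ref{def:aux} (closed under Note~\ref{not:1}). Non-relatedness will be obtained by one of three devices: (a)~Proposition~\ref{prop:a}, whenever the two sequences put different secret sets in $a$'s hands; (b)~a \emph{length invariant} --- a one-line induction on Definition~\ref{def:aux} shows that for $\calltype(\prtype)\in\{\myprivacy{1},\myprivacy{2}\}$ every $\approx^{\calltype}_a$-, hence $\sim^{\calltype}_a$-, related pair has the same length, the only length-changing clause being the $\myprivacy{3}$ row of Table~\ref{table:een} (top); (c)~for the delicate $\beta$-cases in item~$(v)$, a \emph{trace characterization} of $\sim^{(\myprivacy{3},\dirtype,\beta)}_a$: by induction on the $\approx$-derivation one checks that $\cc \sim^{(\myprivacy{3},\dirtype,\beta)}_a \dd$ iff $\cc$ and $\dd$ induce the same $a$-trace, i.e.\ the same sequence of calls involving $a$, each call affecting $a$ being additionally tagged with the set of secrets held by $a$'s partner immediately before it. Finally, the bijection that swaps caller and callee in every call maps $\push$-sequences to $\pull$-sequences preserving involvement, affectedness and $a$'s role in each call, so it reduces every $\pull$- (resp.\ $\pushpull$-vs-$\pull$) case to the corresponding $\push$- (resp.\ $\pushpull$-vs-$\push$) one, which is all we treat explicitly.

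\smallskip

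For $(i)$ and $(ii)$ fix distinct $a,b,c$ (and $d$ when $\pushpull$ is involved in $(i)$). For $\push$ vs $\pull$: $\cc=bc,ab$ and $\dd=cb,ab$ are related under $(\prtype,\push,\otype)$ for $\prtype\in\{\myprivacy{2},\myprivacy{3}\}$ ($bc\approx_a cb$ by the privacy clause, then $a$ is involved-but-not-affected in $a\push b$, so the step is observance-free), whereas under $(\prtype,\pull,\otype)$ the two sequences give $a$ the sets $\{A,B,C\}$ and $\{A,B\}$, so Proposition~\ref{prop:a} applies; the caller/callee-mirrored pair $cb,ba$ vs $bc,ba$ handles the other inclusion. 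For $\pushpull$ vs $\push$: $\cc=cb,ca$ and $\dd=bc,ca$ are related under $(\prtype,\pushpull,\otype)$ for \emph{both} $\otype$ (the partner $c$ holds $\{B,C\}$ after either of $cb,bc$, covering $\beta$, and $a$ ends with $\{A,B,C\}$ after either, covering $\alpha$) but unrelated under $(\prtype,\push,\otype)$, where $a$ gets $\{A,C\}$ versus $\{A,B,C\}$; for the reverse inclusion we use, under $\myprivacy{3}$, the length-distinct pair $\cc=bc,ab$, $\dd=ab$, and under $\myprivacy{2}$ --- where that pair is excluded by~(b) --- the four-agent pair $\cc=cd,ab$, $\dd=bd,ab$ (related under $(\myprivacy{2},\push,\otype)$ as before, but giving $a$ the sets $\{A,B\}$ and $\{A,B,D\}$ under $(\myprivacy{2},\pushpull,\otype)$). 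Item~$(iii)$ ($|\Agents|=3$, observance $\alpha$ on both sides) uses $\cc=bc,ba$, $\dd=cb,ba$ (related under $(\myprivacy{2},\pushpull,\alpha)$ --- $a$ ends with $\{A,B,C\}$ both times --- but not under $(\myprivacy{3},\push,\alpha)$, where $a$ gets $\{A,B\}$ vs $\{A,B,C\}$) together with $\cc=bc$, $\dd=\epsilon$ for the reverse inclusion via~(b). Item~$(iv)$ recombines these: each witness refuting $(\myprivacy{2},\dirtype_1,\beta)\sse(\myprivacy{3},\dirtype_2,\alpha)$ is chosen so that its unique $a$-call is involved-but-not-affected (e.g.\ $\cc=bc,ab$, $\dd=cb,ab$ for $(\dirtype_1,\dirtype_2)=(\push,\pull)$; $\cc=cd,ab$, $\dd=bd,ab$ or $\cc=cb,ca$, $\dd=bc,ca$ for the $\pushpull$ cases, using $|\Agents|>3$ where the hypothesis requires it), so that relatedness under $\beta$ is observance-free while non-relatedness under $\alpha$ is Proposition~\ref{prop:a}; the reverse inclusion is again broken by $\cc=bc$, $\dd=\epsilon$ via~(b).

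\smallskip

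The real work is item~$(v)$, where $\dirtype$ is the same on both sides. Witnessing $(\myprivacy{2},\dirtype,\alpha)\not\sse(\myprivacy{3},\dirtype,\beta)$ forces a pair whose relating $(\myprivacy{2},\dirtype,\alpha)$-derivation must traverse a call that \emph{affects} $a$ --- which is precisely why Proposition~\ref{prop:a} is useless for the $\beta$-side here: $a$'s final secret set has to coincide while $a$'s partner's pre-call secrets must differ. For $\dirtype=\push$ and three agents take $\cc=ca,bc,ba$ and $\dd=ca,cb,ba$: after the common first call $a$ holds $\{A,C\}$; the differing middle call leaves $b$ with $\{B\}$ in $\cc$ and with $\{A,B,C\}$ in $\dd$; after $b\push a$ agent $a$ holds $\{A,B,C\}$ in both (so $(\myprivacy{2},\push,\alpha)$ relates them, by the derivation $\epsilon\approx_a\epsilon$, append $c\push a$, then $bc$ against $cb$, then $b\push a$), while their $a$-traces differ in the tag of the final call ($\{B\}$ vs.\ $\{A,B,C\}$), so $(\myprivacy{3},\push,\beta)$ does not relate them by~(c). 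For $\dirtype=\pushpull$ the discrepancy in the partner's secrets must consist of secrets $a$ already holds, which forces four agents: $\cc=ac,ad,bc,ab$ and $\dd=ac,ad,bd,ab$ work (after $ac,ad$ agent $a$ holds $\{A,C,D\}$; $b$ then acquires $\{A,B,C\}$ vs.\ $\{A,B,C,D\}$, differing only in the already-known secret $D$; so $a$ ends with $\{A,B,C,D\}$ in both, yet the $a$-traces differ). The reverse inclusion $(\myprivacy{3},\dirtype,\beta)\not\sse(\myprivacy{2},\dirtype,\alpha)$ is, once more, witnessed by $\cc=bc$, $\dd=\epsilon$ via~(b); the $\pull$ cases are the caller/callee mirrors of the $\push$ cases. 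Accordingly, the main obstacle is item~$(v)$: one must find witness pairs in which $a$'s final secret set is identical (so the coarser $\alpha$-observance cannot separate them) while the partner's secrets just before the affecting call differ (so the finer $\beta$-observance can) --- forcing genuinely four-agent examples in the $\pushpull$ case --- and one must certify the resulting non-relatedness via the trace characterization~(c), whose verification is the single place where an auxiliary induction beyond Note~\ref{not:1} and Proposition~\ref{prop:a} is needed.
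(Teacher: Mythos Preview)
Your approach is correct and is essentially the paper's: for each pair of types you exhibit explicit witness sequences, using Proposition~\ref{prop:a} and a length argument for non-relatedness, and a direct $\approx$-derivation for relatedness. Your witnesses sometimes differ from the paper's (e.g.\ for item~$(v)$, $\pushpull$, you use $ac,ad,bc,ab$ vs.\ $ac,ad,bd,ab$ where the paper uses $ab,ac,bc,ab$ vs.\ $ab,ac,cd,ab$), but they do the same job. The paper simply lists the witnesses and leaves verification to the reader; you package the verification into three reusable devices---the length invariant~(b), the $a$-trace invariant~(c), and the caller/callee duality---which makes the argument more transparent and cuts the number of explicit cases roughly in half. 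Note that only the \emph{forward} direction of your trace characterization~(c) is needed (same trace is an equivalence, so the invariant lifts from $\approx_a$ to $\sim_a$); the ``iff'' is true but the converse takes additional work you do not actually use.

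Two minor slips, neither fatal: in item~$(v)$, push, after $ca,cb$ agent $b$ holds $\{B,C\}$, not $\{A,B,C\}$ (since $c$ never acquires $A$ under $\push$); the tags $\{B\}$ vs.\ $\{B,C\}$ still differ, so the argument stands. And in item~$(iv)$ your description ``its unique $a$-call is involved-but-not-affected'' does not literally cover the witness $cb,ca$ vs.\ $bc,ca$ for $\dirtype_1=\pushpull$ (there $a$ \emph{is} affected by $c\pushpull a$); the example still works because $cb$ and $bc$ leave $c$ with the same secrets under $\pushpull$, so the $\beta$-condition is met regardless.
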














\begin{proof}
Suppose that $a, b, c \in \Agents$ are different agents.

\NI
$(i)$
Let $\tau_1 = (\myprivacy{2}, \dirtype_1, \otype_1)$ and
$\tau_2 = (\myprivacy{2}, \dirtype_2, \otype_2)$.

Assume first that $|\Agents| > 3$. Suppose $a, b, c, d \in \Agents$ are different agents.
For each pair of distinct direction types we exhibit appropriate call sequences. In each case
the conclusions do not depend on the observance level.
\begin{enumerate}[(a)]

\item \label{1}
$\dirtype_1 = \pull$ and 
$\dirtype_2 = \pushpull$.

Then $b\, \pull \, c, c\, \pull \, a \sim_a^{\tau_1} b\, \pull \, d, c\, \pull \, a$, while
$b \pushpull c, c \pushpull a \not\sim_a^{\tau_2} b \pushpull d, c \pushpull a$.



Further, $b \pushpull c, a \pushpull c \sim_a^{\tau_2} c \pushpull b, a \pushpull c$, while
$b \, \pull \, c, a \, \pull \, c \not\sim_a^{\tau_1} c \, \pull \, b, a \, \pull \, c$.

\item \label{2}
 $\dirtype_1 = \push$ and 
$\dirtype_2 = \pushpull$.

Then  $c \push b, a \push c \sim_a^{\tau_1} d \push b, a \push c$, while
$c \pushpull b, a \pushpull c \not\sim_a^{\tau_2} d \pushpull b, a \pushpull c$.



Further, $c \pushpull b, c \pushpull a \sim_a^{\tau_2} b \pushpull c, c \pushpull a$, while
$c \push b, c \push a \not \sim_a^{\tau_1} b \push c, c \push a$.

\item \label{3}
 $\dirtype_1 = \push$ and 
$\dirtype_2 = \pull$.


Then $b \push c, a \push c \sim_a^{\tau_1} c \push b, a \push c$, while
$b\, \pull \, c, a \, \pull \, c \not \sim_a^{\tau_2} c \, \pull \, b, a \, \pull \, c$.

Further,
$c\, \pull \, b, c\, \pull \, a \sim_a^{\tau_2} b\, \pull \, c, c\, \pull \, a$,
while
$c \push b, c \push a \not \sim_a^{\tau_1} b \push c, c \push a$.  
\end{enumerate}

Assume now that
$\pushpull \not\in \{\dirtype_1, \dirtype_2\}$.  Then the
desired conclusion is established in (\ref{3}), as both examples used
there involve only three agents.

\smallskip

\NI
$(ii)$  
Let $\tau_1 = (\myprivacy{3}, \dirtype_1, \otype_1)$ and
$\tau_2 = (\myprivacy{3}, \dirtype_2, \otype_2)$.
We proceed by the same case analysis as in the proof of $(i)$.

\begin{enumerate}[(a)]
\item 
$\dirtype_1 = \pull$ and $\dirtype_2 = \pushpull$.

Then $b\, \pull \, c, c\, \pull \, a \sim_a^{\tau_1} c\, \pull \, a$, while
$b \pushpull c, c \pushpull a \not\sim_a^{\tau_2} c \pushpull a$.

\item 
$\dirtype_1 = \push$ and $\dirtype_2 = \pushpull$.

Then  $c \push b, a \push c \sim_a^{\tau_1} a \push c$, while
$c \pushpull b, a \pushpull c \not\sim_a^{\tau_2} a \pushpull c$.

\item 
 $\dirtype_1 = \push$ and $\dirtype_2 = \pull$.

Then both examples used in the proof of item (\ref{3}) in $(i)$ apply here, as well.
\end{enumerate}

To prove that  $\CSequence \sim_a^{\tau_2} \CSequenced$ does not imply
$\CSequence \sim_a^{\tau_1} \CSequenced$ we can use the same examples as in the proof of $(i)$.
\smallskip

\NI
$(iii)$
Let 
$\tau_1 = (\myprivacy{2}, \pushpull, \alpha)$ and
$\tau_2 = (\myprivacy{3}, \dirtype, \alpha)$.
We distinguish two cases.
\begin{enumerate}[(a)]

\item $\dirtype = \push$.

Then $c \pushpull b, c \pushpull a \sim_a^{\tau_1} b \pushpull c, c \pushpull a$, while
$c \push b, c \push a \not \sim_a^{\tau_2} b \push c, c \push a$.

\item $\dirtype = \pull$.

Then $b\pushpull c, a \pushpull c \sim_a^{\tau_1} c \pushpull b, a \pushpull c$, while
$b \, \pull\, c, a \, \pull\, c \not \sim_a^{\tau_2} c \, \pull\, b, a \, \pull\, c$.
\end{enumerate}
Next, note that for all $\dirtype \neq \pushpull$ we have
$bc \sim_a^{\tau_2} \epsilon$, while
$bc \not \sim_a^{\tau_1} \epsilon$.
\smallskip

\NI
$(iv)$
Let $\tau_1 = (\myprivacy{2}, \dirtype_1, \beta)$ and
$\tau_2 = (\myprivacy{3}, \dirtype_2, \alpha)$.

Assume first that $|\Agents| > 3$. Suppose $a, b, c, d \in \Agents$ are different agents.
For each pair of distinct direction types we exhibit appropriate call sequences. 
\begin{enumerate} [(a)]

\item $\dirtype_1 = \push$, $\dirtype_2 = \pull$.

Then $c \push b, a \push b \sim_a^{\tau_1} b \push c, a \push b$, while
$c\, \pull \, b, a\, \pull \, b \not \sim_a^{\tau_2} b\, \pull \, c, a\, \pull \, b$.

\item $\dirtype_1 = \push$, $\dirtype_2 = \pushpull$.


Then $c \push b, a \push c \sim_a^{\tau_1} d \push b, a \push c$, while
$c \pushpull b, a \pushpull c \not\sim_a^{\tau_2} d \pushpull b, a \pushpull c$.

\item $\dirtype_1 = \pull$, $\dirtype_2 = \push$.

Then $b\, \pull \, c, b\, \pull \, a \sim_a^{\tau_1} c\, \pull \, b, b\, \pull \, a$, while
$b \push c, b \push a \not \sim_a^{\tau_2} c \push b, b \push a$.

\item $\dirtype_1 = \pull$, $\dirtype_2 = \pushpull$.

Then $b\, \pull \, c, c\, \pull \, a \sim_a^{\tau_1} b\, \pull \, d, c\, \pull \, a$, while
$b \pushpull c, c \pushpull a \not\sim_a^{\tau_2} b \pushpull d, c \pushpull a$.

\item $\dirtype_1 = \pushpull$, $\dirtype_2 = \push$.

Then $c \pushpull b, c \pushpull a \sim_a^{\tau_1} b \pushpull c, c \pushpull a$, while
$c \push b, c \push a \not \sim_a^{\tau_2} b \push c, c \push a$.

\item $\dirtype_1 = \pushpull$, $\dirtype_2 = \pull$.

Then $b \pushpull c, a \pushpull c \sim_a^{\tau_1} c \pushpull b, a \pushpull c$, while
$b \, \pull \, c, a \, \pull \, c \not\sim_a^{\tau_2} c \, \pull \, b, a \, \pull \, c$.

\end{enumerate}

Assume now that
$\pushpull \not\in \{\dirtype_1, \dirtype_2\}$.  Then the
desired conclusion is established in (a) and (c), as both examples used
there involve only three agents.

Next, note that for all direction types
$bc \sim_a^{\tau_2} \epsilon$, while
$bc \not \sim_a^{\tau_1} \epsilon$.
\smallskip

\NI
$(v)$ Let $\tau_1 = (\myprivacy{2}, \dirtype, \alpha)$ and
$\tau_2 = (\myprivacy{3}, \dirtype, \beta)$.

Assume first that $|\Agents| > 3$. Suppose $a, b, c, d \in \Agents$
are different agents.  

\begin{enumerate}[(a)]

\item $\dirtype = \pushpull$.

Then 
$a \pushpull b, a \pushpull c, b \pushpull c, a \pushpull b
\sim_a^{\tau_1} a \pushpull b, a \pushpull c, c \pushpull d, a
\pushpull b$, while
$a \pushpull b, a \pushpull c, b \pushpull c, a \pushpull b \not
\sim_a^{\tau_2} a \pushpull b, a \pushpull c, c \pushpull d, a
\pushpull b$.

\item $\dirtype = \push$.

  Then
  $c \push a, b \push c, b \push a \sim_a^{\tau_1} c \push a, c \push
  b, b \push a$, while
  $c \push a, b \push c, b \push a \not \sim_a^{\tau_2} c \push a, c
  \push b, b \push a$.

\item $\dirtype = \pull$.

  Then
  $a\, \pull\, c, c\, \pull\, b, a\, \pull\, b \sim_a^{\tau_1} a\, \pull\, c, b\, \pull\,
  c, a\, \pull\, b$, while
  $a\, \pull\, c, c\, \pull\, b, a\, \pull\, b \not \sim_a^{\tau_2} a\, \pull\, c, b\, \pull\,
  c, a\, \pull\, b$.
\end{enumerate}

Assume now that $\dirtype \neq \pushpull$.  Then the desired
conclusion is established in (b) and (c) as both examples used there
involve only three agents.

Finally, note that for all direction types
$bc \sim_a^{\tau_2} \epsilon$, while
$bc \not \sim_a^{\tau_1} \epsilon$.
\HB
\end{proof}



The above Lemmas imply the classification of the
$\sim^\calltype_a$ relations given in Theorem \ref{thm:classification}
and visualized in Figures \ref{fig:1} and \ref{fig:2}.  Indeed, the
equalities (represented as sets) are established in Lemma
\ref{pro:1}, the strict inclusions (that correspond to the arrows) are
established in Lemma \ref{pro:comparison}, and Lemma
\ref{pro:2} implies that no further strict inclusions (i.e., arrows)
are present. For example, there is no arrow in Figure \ref{fig:2}
between two different diamond shaped subgraphs that correspond to the
direction types $\push, \pushpull$, and $\pull$ because by Lemma
\ref{pro:2}$(iv)$ for $\dirtype_1 \neq \dirtype_2$ the call types
$(\myprivacy{2}, \dirtype_1, \beta)$ and $(\myprivacy{3}, \dirtype_2, \alpha)$ are
incomparable.

\section{Applications of the Classification} \label{sec:applications}

The section shows how the above systematisation of
$\sim^{\calltype}_a$ relations, through the standard epistemic logic
semantics of Definition \ref{def:truth}, enables general insights into
the epistemic effects of call sequences and offers a natural handle on
how to model assumptions to the effect that agents have common
knowledge of the protocol in use.


\subsection{Epistemic effects of communication types}
\label{sec:epistemic}

The above classification is useful in order to draw general epistemic
consequences in presence of different communication types.  
Below we will be using two fragments of $\lang$: 
\begin{itemize}
\item $\lang_1^+$, consisting of the {\em literals} $F_a S$ and
 $\neg F_a S$, $\wedge, \lor$ and $K_a$,

\item $\lang_2^+$, consisting of the atomic formulas $F_a S$, $\wedge, \lor$ and $K_a$.

\end{itemize}

\begin{proposition} \label{prop:1}
Consider two call types $\tau_1$ and $\tau_2$ such
  that $\tau_1(\dirtype) = \tau_2(\dirtype)$.

\begin{enumerate}[(i)]
\item 
For all literals $\psi$ and all  $\CSequence$, \ $\CSequence \models^{\tau_2} \psi \implies \CSequence \models^{\tau_1} \psi$.

\item If $\tau_1 \sse \tau_2$ then
\[
\mbox{for all formulas $\phi \in \lang^+_1$ and all  $\CSequence$,}\  \CSequence \models^{\tau_2} \phi \implies \CSequence \models^{\tau_1} \phi.
\]
\end{enumerate}
\end{proposition}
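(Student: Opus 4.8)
The plan is to prove both parts by structural induction on the formula, with part $(i)$ serving as the base case for part $(ii)$. First I would observe that since $\tau_1(\dirtype) = \tau_2(\dirtype)$, the two call types induce exactly the same function on gossip situations: by Definition \ref{def:effects}, $\CSequence(\init)$ is the same gossip situation whether we interpret the calls in $\CSequence$ under $\tau_1$ or under $\tau_2$. Hence for any atomic formula $F_a S$ we have $\CSequence \models^{\tau_1} F_a S$ iff $S \in \CSequence(\init)_a$ iff $\CSequence \models^{\tau_2} F_a S$, and likewise $\CSequence \models^{\tau_1} \neg F_a S$ iff $\CSequence \models^{\tau_2} \neg F_a S$. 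This immediately gives part $(i)$, since a literal is either $F_a S$ or $\neg F_a S$ (in fact we get the stronger biconditional for literals, but only the stated implication is needed).

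For part $(ii)$, assume $\tau_1 \sse \tau_2$, i.e.\ $\sim^{\tau_1}_a \sse \sim^{\tau_2}_a$ for every agent $a$. Note first that since $\tau_1 \sse \tau_2$ forces the two types to agree on direction (the inclusion of indistinguishability relations presupposes a common underlying situation-transformation, which in our setting is governed by the direction parameter), the hypothesis $\tau_1(\dirtype)=\tau_2(\dirtype)$ is automatically in force. I would then prove by induction on the structure of $\phi \in \lang^+_1$ that $\CSequence \models^{\tau_2} \phi$ implies $\CSequence \models^{\tau_1} \phi$ for all $\CSequence$. The base case---$\phi$ a literal---is exactly part $(i)$. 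For $\phi = \phi_1 \wedge \phi_2$ and $\phi = \phi_1 \vee \phi_2$, the claim follows immediately from the induction hypothesis applied to $\phi_1$ and $\phi_2$, since truth of conjunctions and disjunctions is defined componentwise and the implication is preserved under both $\wedge$ and $\vee$. The only nontrivial case is $\phi = K_a \psi$. Suppose $\CSequence \models^{\tau_2} K_a \psi$; we must show $\CSequence \models^{\tau_1} K_a \psi$, i.e.\ that for every $\CSequenced \in \CSequences^{\tau_1}$ with $\CSequence \sim^{\tau_1}_a \CSequenced$ we have $\CSequenced \models^{\tau_1} \psi$. Fix such a $\CSequenced$. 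By $\tau_1 \sse \tau_2$ we have $\CSequence \sim^{\tau_2}_a \CSequenced$, so from $\CSequence \models^{\tau_2} K_a \psi$ we obtain $\CSequenced \models^{\tau_2} \psi$. Since $\psi \in \lang^+_1$ is a proper subformula of $K_a\psi$, the induction hypothesis yields $\CSequenced \models^{\tau_1} \psi$, as required.

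The main point to get right---and the only place where the hypotheses genuinely bite---is the $K_a$ case: the negation occurs only at the level of literals (the fragment $\lang^+_1$ has no outer negation and no negated modal formulas), so the implication direction never has to be reversed when passing through $\neg$, and the inclusion $\sim^{\tau_1}_a \sse \sim^{\tau_2}_a$ is used in precisely the direction that makes the quantifier over indistinguishable sequences in the semantics of $K_a$ transfer correctly. I do not expect any serious obstacle; the argument is a routine induction once the observation about direction-determined situations (part $(i)$) is in place, and the restriction to $\lang^+_1$ is exactly what keeps the monotonicity argument coherent.
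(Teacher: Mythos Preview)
Your proof is correct and follows essentially the same approach as the paper: part $(i)$ is the observation that equal direction types yield identical gossip situations, and part $(ii)$ is a structural induction whose only nontrivial case is $K_a\psi$, handled exactly as you do via the inclusion $\sim^{\tau_1}_a \subseteq \sim^{\tau_2}_a$. One small correction: your aside that ``$\tau_1 \sse \tau_2$ forces the two types to agree on direction'' is false (for instance, Figure~\ref{fig:1} shows $(\myprivacy{2}, \push, \beta) \subset (\myprivacy{2}, \pushpull, \otype)$ when $|\Agents|=3$), but this does not affect your argument since the hypothesis $\tau_1(\dirtype)=\tau_2(\dirtype)$ is already assumed in the proposition and you never actually use the aside.
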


\begin{proof}
\mbox{}

\NI
$(i)$
By assumption $\tau_1(\dirtype) = \tau_2(\dirtype)$, so both occurrences of
$\CSequence$ refer to identical call sequences. Hence for all atomic formulas $F_a S$
and all  $\CSequence$, \ $\CSequence \models^{\tau_2} F_a S$ iff $\CSequence \models^{\tau_1} F_a S$.
\smallskip

\NI
$(ii)$
We proceed by induction on the structure of $\phi$. 
The only case that  requires explanation is when $\phi$ is
of the form $K_a \psi$. Suppose that
$\CSequence \models^{\tau_2} K_a \psi$.  To prove
$\CSequence \models^{\tau_1} K_a \psi$ take a call sequence
$\CSequenced$ such that $\CSequence \sim^{\tau_1}_a \CSequenced$.  By
assumption $\tau_1 \subseteq \tau_2$, hence
$\CSequence \sim^{\tau_2}_a \CSequenced$ and so
$\CSequenced \models^{\tau_2} \psi$.  By the induction hypothesis
$\CSequenced \models^{\tau_1} \psi$, so by definition
$\CSequence \models^{\tau_1} K_a \psi$.
\HB
\end{proof}

It is easy to construct examples showing that the implication in
$(ii)$ does not hold for all formulas.  For instance, for
$\tau_1 = (\myprivacy{1}, \pushpull, \alpha)$ and
$\tau_2 = (\myprivacy{3}, \pushpull, \alpha)$ we have
$\tau_1 \sse \tau_2$ by Theorem \ref{thm:classification} and
$bc \models^{\tau_2} \neg K_a F_b C$ but not $bc \models^{\tau_1} \neg K_a F_b C$.

We finally compare knowledge for call types with different
direction types.  Then claim $(i)$ in the above Proposition does not
hold anymore. Indeed, for $\tau_1$ and $\tau_2$ such that
$\tau_1(\dirtype) = \pushpull$ and $\tau_2(\dirtype) = \push$ we
have $ab \models^{\tau_2} \neg F_a B$ but not
$ab \models^{\tau_1} \neg F_a B$.
However, the following weaker claim does hold.

\begin{proposition}\label{prop:newnew}
Consider two call types $\tau_1$ and $\tau_2$ such
that $\tau_1(\dirtype) = \pushpull$. 

\begin{enumerate}[(i)]
\item 
For all atomic formulas $\psi$ and all  $\CSequence$, \ $\CSequence \models^{\tau_2} \psi \implies \CSequence \models^{\tau_1} \psi$.

\item If $\tau_1 \sse \tau_2$ then
\[
\mbox{for all formulas $\phi \in \lang^+_2$ and all  $\CSequence$,}\  \CSequence \models^{\tau_2} \phi \implies \CSequence \models^{\tau_1} \phi.
\]
\end{enumerate}
\end{proposition}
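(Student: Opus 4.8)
The plan is to mirror the structure of the proof of Proposition~\ref{prop:1}, adapting it to accommodate the fact that $\tau_1$ and $\tau_2$ may now have different direction types. The key new ingredient is that $\tau_1(\dirtype) = \pushpull$, which makes the direction of $\tau_1$ ``maximal'': under $\pushpull$ every call sends secrets both ways, so the set of secrets an agent is familiar with after a call sequence under $\tau_1$ is a superset of the set they are familiar with under any other direction type applied to the same (underlying) sequence. This is the monotonicity fact that replaces the ``identical call sequences'' observation used in Proposition~\ref{prop:1}$(i)$.

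For part~$(i)$, I would first establish the auxiliary claim that for every call sequence $\CSequence$ and every agent $a$, $\CSequence^{\tau_2}(\init)_a \subseteq \CSequence^{\tau_1}(\init)_a$, where the superscript records which direction type is used to interpret the calls. This follows by a straightforward induction on $|\CSequence|$ using Definition~\ref{def:effects}: appending a call $\Call$ under $\pushpull$ replaces $\Situation_a$ by $\Situation_a \cup \Situation_b$ whenever $a \in \Call$, while under $\push$ or $\pull$ the update either does this, or leaves $\Situation_a$ untouched, or adds even fewer secrets; in all cases the $\tau_1$-value dominates, and for outsiders both sides are unchanged. Given this claim, if $\CSequence \models^{\tau_2} F_a S$ then $S \in \CSequence^{\tau_2}(\init)_a \subseteq \CSequence^{\tau_1}(\init)_a$, hence $\CSequence \models^{\tau_1} F_a S$. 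Note this only works one way, which is why part~$(i)$ is stated for atomic formulas, not literals --- the negation $\neg F_a S$ does not transfer, exactly as the counterexample $ab \models^{\tau_2} \neg F_a B$ preceding the proposition shows.

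For part~$(ii)$, I would proceed by induction on the structure of $\phi \in \lang_2^+$. The atomic case is part~$(i)$. The cases for $\wedge$ and $\lor$ are immediate from the induction hypothesis (and this is precisely why $\lang_2^+$ excludes negation). For $\phi = K_a \psi$: assume $\CSequence \models^{\tau_2} K_a \psi$ and let $\CSequenced$ satisfy $\CSequence \sim^{\tau_1}_a \CSequenced$; since $\tau_1 \sse \tau_2$ we get $\CSequence \sim^{\tau_2}_a \CSequenced$, hence $\CSequenced \models^{\tau_2} \psi$, and by the induction hypothesis $\CSequenced \models^{\tau_1} \psi$. Thus $\CSequence \models^{\tau_1} K_a \psi$.

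\textbf{The main obstacle} is really just being careful about what ``$\tau_1 \sse \tau_2$'' means when the direction types differ: by the convention introduced before Theorem~\ref{thm:classification}, this abbreviates $\sim^{\tau_1}_a \subseteq \sim^{\tau_2}_a$ for all $a$, with the understanding that one systematically reinterprets the direction type of every call in a sequence when passing between the two $\CSequences^{\tau_i}$. So the modal step is formally identical to that in Proposition~\ref{prop:1}$(ii)$ once one accepts this reinterpretation; the genuinely new work is confined to the subset-monotonicity lemma for the familiarity sets under $\pushpull$, which is elementary. I would also remark, as the paper does for the previous proposition, that $(ii)$ fails for formulas outside $\lang_2^+$ --- the same $bc \models^{\tau_2} \neg K_a F_b C$ example applies.
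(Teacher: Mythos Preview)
Your proposal is correct and follows essentially the same approach as the paper: for $(i)$ the paper argues by induction on $|\CSequence|$ using Definition~\ref{def:effects} (handling $\push$ explicitly and noting $\pull$ is analogous, after observing that the case $\tau_2(\dirtype)=\pushpull$ reduces to Proposition~\ref{prop:1}), which amounts to your subset-monotonicity claim $\CSequence^{\tau_2}(\init)_a \subseteq \CSequence^{\tau_1}(\init)_a$; for $(ii)$ the paper simply invokes $(i)$ together with the structural induction from Proposition~\ref{prop:1}$(ii)$, exactly as you do.
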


\begin{proof} 
By Proposition \ref{prop:1} we can assume that  $\tau_2(\dirtype) \neq \pushpull$.

\NI
$(i)$
We use induction on the
length $|\cc|$ of $\cc$.  Assume that
$\tau_2(\dirtype) = \push$.  
If $|\cc| = 0$ then $\cc = \epsilon$ and
$ \epsilon \models^{\tau_2} F_c D$ iff $D = C$
iff $ \epsilon \models^{\tau_1} F_c D$.
Now suppose the claim is proven for $\cc$ and consider $\cc. ab$. 

For any agent $c \neq b$, we have by Definition~\ref{def:effects}
$\cc. a\push b \models^{\tau_2} F_c D$ iff
$\cc \models^{\tau_2} F_c D$, which implies by the induction
hypothesis $\cc \models^{\tau_1} F_c D$, and hence
$\cc. a\pushpull b \models^{\tau_1} F_c D$.  For agent $b$, we have
$\cc. a \push b \models^{\tau_2} F_bD$ iff
($\cc \models^{\tau_2} F_aD$ or $\cc \models^{\tau_2} F_bD$) and
$\cc. a \pushpull b \models^{\tau_1} F_bD$ iff
($\cc\models^{\tau_1} F_aD$ or $\cc \models^{\tau_1} F_bD$), so the
claim for $b$ holds by the induction hypothesis, as well.

The proof for $\tau_2(\dirtype) = \pull$ is analogous and omitted.
\smallskip

\NI
$(ii)$
The claim follows by $(i)$ and the argument used in the proof of Proposition
\ref{prop:1}.
\HB
\end{proof}

Proposition \ref{prop:1} holds for example for
$\tau_1 = (\myprivacy{1}, \pushpull, \beta)$ and
$\tau_2 = (\myprivacy{3}, \pushpull, \alpha)$, since by Theorem \ref{thm:classification}
\[
(\myprivacy{1}, \pushpull, \beta) \subset
(\myprivacy{2}, \pushpull, \beta) \subset
(\myprivacy{3}, \pushpull, \beta) \subset
(\myprivacy{3}, \pushpull, \alpha).
\]

In turn, Proposition \ref{prop:newnew} holds for example for
$\tau_1 = (\myprivacy{1}, \pushpull, \beta)$ and
$\tau_2 = (\myprivacy{3}, \push, \alpha)$, since by Theorem \ref{thm:classification}
\[
(\myprivacy{1}, \pushpull, \beta) = 
(\myprivacy{1}, \push, \beta) \subset
(\myprivacy{2}, \push, \beta) \subset
(\myprivacy{3}, \push, \beta) \subset
(\myprivacy{3}, \push, \alpha).
\]

In particular, for both pairs of $\tau_1$ and $\tau_2$ for all call
sequences $\CSequence$, $\CSequence \models^{\tau_2} K_a F_b C$
implies $\CSequence \models^{\tau_1} K_a F_b C$.
Informally, under $\tau_1$ the agents are then more informed about the
knowledge of other agents than under $\tau_2$.

Further, note that by Theorem \ref{thm:classification} if
$\tau_1(\dirtype) = \pushpull \neq \tau_2(\dirtype)$ then
$\tau_1 \sse \tau_2$ iff $\tau_1(\prtype) = \myprivacy{1}$, so under
the assumption $\tau_1(\dirtype) = \pushpull \neq \tau_2(\dirtype)$
the second claim of Proposition \ref{prop:newnew} can be rewritten as
\[
\mbox{If $\tau_1(\prtype) = \myprivacy{1}$ then for all formulas $\phi \in \lang^+_2$ and all  $\CSequence$,}\  \CSequence \models^{\tau_2} \phi \implies \CSequence \models^{\tau_1} \phi.
\]

This implication (under the assumption
$\tau_1(\dirtype) = \pushpull \neq \tau_2(\dirtype)$) does not hold
for the other two privacy types because of the following instructive
counterexample.

\begin{example}
Assume $\Agents = \{a,b,c\}$.
Suppose $\tau_1 = (\prtype, \pushpull, \alpha)$,
$\tau_2 = (\prtype, \push, \alpha)$, where $\prtype \neq \myprivacy{1}$,
and $\CSequence = ac, cb, ba$.
We claim that then $\CSequence \models^{\tau_2} K_a K_c F_bC$ but not
$\CSequence \models^{\tau_1} K_a K_c F_bC$.
\smallskip

\NI
$(i)$ $\prtype = \myprivacy{2}$.

Then $\CSequence \models^{\tau_2} K_a K_c F_bC$. The reason is that
the only call sequence $\sim^{\tau_2}_a$ equivalent to $\CSequence$ is
$\CSequence$ itself.  Indeed, if
$\CSequence \sim^{\tau_2}_a \CSequenced$, then $\CSequenced$ has to be
of the form $ac, \Call, ba$, where $a \not\in \Call$, and by the first
entry in Table \ref{table:een}, bottom, also
$\CSequence(\init)_a = (ac, \Call, ba)(\init)_a$ has to hold. But
$\CSequence(\init)_a = \{A,B,C\}$, which implies that $\Call = cb$.

Thus $\CSequence \models^{\tau_2} K_a K_c F_bC$ iff
$\CSequence \models^{\tau_2} K_c F_bC$ and the latter is easy to
check.  However, $\CSequence \not\models^{\tau_1} K_a K_c F_bC$ since
$\CSequence \sim^{\tau_1}_a ac, de, ba$ and
$ac, de, ba \not \models^{\tau_1} K_c F_bC$.  \smallskip

\NI
$(ii)$ $\prtype = \myprivacy{3}$.

The reasoning is now a bit more involved.  To show that
$\CSequence \models^{\tau_2} K_a K_c F_bC$ take a call sequence
$\CSequenced$ such that $\CSequence \sim^{\tau_2}_a \CSequenced$.
Then $\CSequenced$ is of the form
$\CSequenced_1, ac, \CSequenced_2, ba, \CSequenced_3$, where agent $a$
is not involved in any call from
$\CSequenced_1, \CSequenced_2, \CSequenced_3$. Moreover
$\CSequence(\init)_a = \CSequenced(\init)_a$ holds, as well. But, as
already noted in $(i)$, $\CSequence(\init)_a = \{A,B,C\}$, which
implies that one of the calls in $\CSequenced_1$ or $\CSequenced_2$ is
$cb$.

Now, for any call sequence $\CSequenced$ in which the call $cb$
appears we have $\CSequenced \models^{\tau_2} K_c F_bC$. Indeed, if
$\CSequenced \sim^{\tau_2}_c \CSequenced'$ then the call $cb$ appears
in $\CSequenced'$, as well, and hence
$\CSequenced' \models^{\tau_2} F_bC$.
However, $\CSequence \not\models^{\tau_1} K_a K_c F_bC$ since, as in
$(i)$, $\CSequence \sim^{\tau_1}_a ac, de, ba$ and
$ac, de, ba \not \models^{\tau_1} K_c F_bC$.

Analogous examples can be constructed for the pull calls.  
\HB
\end{example}

This example shows that for the $\myprivacy{2}$ and $\myprivacy{3}$
privacy degrees the push calls may convey more knowledge than the
push-pull calls, even though the former ones result in less
informative communication.  The same is the case for the pull calls.




\subsection{Common knowledge of protocols} \label{sec:further}

When reasoning about specific protocols it is necessary to limit the
set of considered call sequences to those that are `legal' for it.
When the agents form a graph given in advance one can simply limit the
set of considered call sequences by allowing only syntactically legal
calls.  This affects the definition of semantics and can be of
importance when reasoning about the correctness of specific protocols.

For example, in \cite{apt15epistemic} a specific protocol for a
directed ring is proved correct (Protocol R2 on page 61, for 3 or 4
agents) by allowing for each agent $a$ only the calls between her and
her successor $a \oplus 1$, and using the fact that the formula
$K_a F_{a \oplus 1} A \ominus 1 \to F_a A \ominus 1$ is then true.
Here $A \ominus 1$ is the secret of the predecessor of agent $a$, so
this formula states that if agent $a$ knows that her successor is
familiar with the secret $A \ominus 1$ of her predecessor then agent
$a$ is familiar with the secret $A \ominus 1$.

A more challenging task is to incorporate into the framework an
assumption that the agents have common knowledge of the underlying
protocol.\footnote{This issue was identified as an open problem for
  epistemic gossip in \cite{apt15epistemic}. The same issue manifests
  itself in other knowledge-based asynchronous protocols, such as the
  one investigated recently in \cite{knight17reasoning}.}

\begin{example}
  Consider Protocol \ref{protocol.mineen} (Hear my Secret) from
  Section \ref{sec:preliminaries} with the direction type $\pushpull$.
  Recall that in this protocol an agent $a$ can call agent $b$ if
  $\neg K_a F_b a$ is true after the current call sequence.  So each
  pair of agents can communicate at most once.

  Assume now four agents $a,b,c,d$.  Then the call sequence $ab,bc,bd$
  is compliant with the protocol independently on the assumptions
  about the privacy degree and observance.  Let us analyse the
  situation after this call sequence took place.

  Assume first the privacy degree $\myprivacy{1}$.  Then agent $c$
  knows which calls took place and hence knows that after the third call
  agent $d$ is familiar with her secret, $C$. So after these three
  calls agent $c$ cannot call agent $d$ anymore.

  The situation changes when the privacy degree is $\myprivacy{2}$.
  Through the second call agent $c$ learns the secret $A$, so she
  knows that the first call was $ab$ or $ba$.  Agent $c$ is not
  involved in the third call, but by the assumed privacy degree she
  still knows that a third call has taken place.

  Assume now that the agents have common knowledge of the
  protocol. So agent $c$ knows that each pair of agents can
  communicate at most once. Hence she can conclude that $d$ must be
  involved in the third call and consequently that the third call was
  between agent $d$ and agent $a$ or $b$.  Agent $c$ therefore now
  knows that after the third call agent $d$ is familiar with at least
  $3$ secrets: $A, B, D$ if the call was with agent $a$ or
  $A, B, C, D$ if the call was with agent $b$.  But agent $c$ cannot
  anymore conclude that agent $d$ is familiar with her secret, $C$,
  and consequently can call $d$.

  Suppose now that the privacy degree is still $\myprivacy{2}$ but the
  call sequence is $ab, bc, cd, bd$. Consider now agent $a$. After the
  fourth call she knows that after the call $ab$ three calls took
  place between the agents $b,c,d$. Further, she knows that each pair
  of agents can communicate at most once. So agent $a$ concludes that
  each pair of agents from $\C{b,c,d}$ communicated precisely once. In
  particular both agents $c$ and $d$ communicated with agent $b$ and
  hence both of them are familiar with the secret $A$.  So after these
  four calls agent $a$ cannot call anymore any agent.

  Finally, consider the privacy degree $\myprivacy{3}$  and suppose the
  call sequence is $ab,bc,bd$ or $ab, bc, cd, bd$.  Then agent
  $a$ does not know whether any calls took place after the call
  $ab$. In particular she cannot conclude that any of the agents $c$
  and $d$ are familiar with her secret and hence can call either $c$
  or $d$.  \HB
\end{example}

To discuss the matters further let us be more precise about the syntax
of the protocols.  An \bfe{epistemic gossip protocol} (in short a
protocol) consists of the union of $|\Agents|$ sets of instructions,
one set for each agent.  Each instruction is of the form
\begin{center}
{\em if} $\phi$ {\em then} execute call $\Call$,
\end{center}
in symbols $\phi \to \Call$, where $\phi$ is a Boolean combination of
formulas of the form $K_a \psi$, where 
$a$ is the caller in the call $\Call$.
The formula $\phi$ is referred to as an \bfe{epistemic guard}. Such instructions
are executed iteratively, where at each time one instruction is
selected (at random, or based on some fairness considerations) whose
guard is true after the call sequence executed so far.\footnote{This
  simple rendering of protocols suffices for the purposes of this
  section. More sophisticated formalizations of epistemic gossip
  protocols have been provided in \cite{ADGH14,apt15epistemic}.}

We therefore view a protocol $P$
as a set of instructions $\phi \to \Call$. For example, the
instructions composing Protocol \ref{protocol.mineen}, are of the form
\[
\neg K_a F_bA \to ab
\]
for all agents $a$ and $b$. That is, if $a$ does not know whether $b$
is not familiar with her secret, $a$ calls $b$.

To justify the restriction on the syntax of the epistemic guards note
the following observation.

\begin{note} \label{note:p3} Consider a call type $\tau$ such that
  $\tau(\prtype) = \myprivacy{3}$.  Then for all agents $a, b, c$ and
  all call sequences $\CSequence$ and formulas $\phi$
\[
\mbox{$\CSequence \models^{\tau} K_a \phi$ iff $\CSequence. bc \models^{\tau} K_a \phi$.}
\]
Consequently, the same equivalence holds for all formulas that are
Boolean combinations of formulas of the form $K_a \phi$, so in
particular for all epistemic guards used in the instructions for agent $a$.
\end{note}

\begin{proof}
By Definition \ref{def:aux} if the privacy type of $\tau$ is $\myprivacy{3}$ then
$\CSequence \sim^{\tau}_a \CSequence. bc$, which implies the claim.
\qed
\end{proof}

This note states that the calls in which agent $a$ is not involved
have no effect on the truth of the epistemic guards used in the
instructions for agent $a$.  If we allowed in the epistemic guards for
agent $a$ as conjuncts formulas not prefixed by $K_a$, this natural
and desired property would not hold anymore.

Indeed, assume the privacy type $\myprivacy{3}$ and consider the
protocol for three agents, $a, b, c$, in which the only instructions
are $\neg F_b A \land F_b C \to ab$ for agent $a$ and
$\neg F_b C \to bc$ for agent $b$.  Then initially only the call $bc$
can be performed. After it, the call $ab$ can be performed upon which
the protocol terminates. In other words, the call $bc$, of which agent
$a$ is not aware, affects the truth of its epistemic guard, which
contradicts the idea behind the privacy type $\myprivacy{3}$.

For the privacy type $\myprivacy{1}$ this restriction on the syntax of
the epistemic guards is not needed as then all formulas are equivalent
to the propositional ones.

\begin{note} \label{not:pr1}
  Consider a call type $\tau$ such that
  $\tau(\prtype) = \myprivacy{1}$.  Then for all agents $a$ and all
  formulas $\phi$ and call sequences $\CSequence$
\[
  \mbox{$\CSequence \models^{\tau} K_a \phi$ iff
    $\CSequence \models^{\tau} \phi$.}
\]
\end{note}

\begin{proof}
  This is a direct consequence of the fact that when the privacy type
  of $\tau$ is $\myprivacy{1}$ then by Lemma \ref{pro:1}$(i)$ each relation
  $\sim^{\tau}_a$ is the identity.  \qed
\end{proof}

Let us return now to the matter of common knowledge of a protocol.  In
Definition \ref{def:aux} the $\calltype$-dependent indistinguishability
relations are constructed assuming that any call is possible after any
call sequence. This builds in the resulting gossip models
$\Model^\calltype = (\CSequences^\calltype, \set{\sim^\calltype_a}_{a
  \in \Agents})$ the assumption that agents may consider any call
sequence possible in principle, including calls that are not legal
if we assume that the agents have common knowledge of the protocol
in use.


Specifically, given a gossip model
$\Model^\calltype = (\CSequences^\calltype, \set{\sim^\calltype_a}_{a
  \in \Agents})$ and a protocol $P$ we define the
\bfe{computation tree}
$\CSequences^\calltype_P \subseteq \CSequences^\calltype$ of 
$P$ (cf.~\cite{apt15epistemic}) as the set of call sequences
inductively defined as follows:
\begin{description}
\item{[Base]} $\epsilon \in \CSequences^\calltype_P$,

\item{[Step]} If $\CSequence \in \CSequences^\calltype_P$ and
  $\CSequence \models^\calltype \phi$ then
  $\CSequence.\Call \in \CSequences^\calltype_P$, where
  $\phi \to \Call \in P$.
\end{description}
So $\CSequences^\calltype_P$ is a (possibly infinite) set of finite
call sequences that is iteratively obtained by performing a `legal'
call (according to protocol $P$) from a `legal' (according to protocol
$P$) call sequence. We refer to such legal call sequences as
$P$-compliant.\footnote{We call $\CSequences^\calltype_P$ a tree since
  its elements can be arranged in an obvious way in (a possibly
  infinite, but finitely branching) tree.}

Note however, that when building such a computation tree, the
epistemic guard $\phi$ is evaluated with respect to the underlying
gossip model $\Model^\calltype$, which may well include call sequences
that are not $P$-compliant. So in order to restrict the domain of the
gossip model to only $P$-compliant sequences, the epistemic guards of
the protocol need to be evaluated, and to do that one needs in turn a
gossip model, which contains only $P$-compliant sequences. This
circularity is not problematic for the call types involving privacy
degrees $\myprivacy{1}$ and $\myprivacy{2}$, as the $\sim_a^{\tau}$
relations then link only sequences of equal length, allowing therefore
for call sequences and these equivalence relations to be inductively
constructed in parallel. That is however not the case for the call types
involving privacy degree $\myprivacy{3}$, as then call sequences of
any length may be indistinguishable from the actual call sequence.

We propose here a solution to the above issue, showing how
under a natural assumption on the syntax of the epistemic guards one
can construct, also for the privacy degree $\myprivacy{3}$, a gossip
model which consists only of call sequences that are compliant with a given
protocol $P$.

Fix till the end of the section an arbitrary call type $\tau$.  First,
we introduce the definition of semantics relativised to a set
$X \sse \CSequences^\tau$ of call sequences. Let
$\Model_X^\calltype = (X, \set{\sim^\calltype_a}_{a \in \Agents})$,
where each $\sim^\calltype_a$ relation is restricted to $X \times X$,
and let $\CSequence \in X$.  Then the definition of semantics is the same
as before with the except of the formulas of the form $K_a \phi$:
\begin{eqnarray*}
  (\Model_X^\calltype, \CSequence) \models  K_a \phi &  \mbox{iff}  & \forall \CSequenced \in X \mbox{ such that } \CSequence \sim^{\calltype}_a \CSequenced, ~ (\Model_X^\calltype, \CSequenced) \models \phi.
\end{eqnarray*}

Fix now a protocol $P$ and a set $X \sse \CSequences^\tau$. We define
the relativised computation tree of $P$ as the set
$\CSequences^\calltype_{(P,X)}$ obtained by replacing the above Base and Step
conditions by
\begin{description}
\item{[Base]} $\epsilon \in \CSequences^\calltype_{(P,X)}$,

\item{[Step]} If $\CSequence \in X \cap \CSequences^\calltype_{(P,X)}$
  and $(\Model_X^\calltype, \CSequence) \models \phi$ then
  $\CSequence.\Call \in \CSequences^\calltype_{(P,X)}$, where
  $\phi \to \Call \in P$,
\end{description}
and refer to each call sequence from
$\CSequences^\calltype_{(P,X)}$ as \emph{$(P,X)$-compliant}.

We now limit the syntax of epistemic guards as follows.  A
formula $\hat{K}_a\phi$ is an abbreviation for $\neg K_a \neg \phi$
and $\hat{\lang}$ denotes the existential fragment of $\lang$,
consisting of only literals, $\vee$, $\land$, and $\hat{K}_a$.


The following lemma clarifies the introduction of the language $\hat{\lang}$.

\begin{lemma} \label{lem:mono}
If $X \sse Y \sse \CSequences^\tau$ then
\[
\mbox{for all formulas $\phi \in \hat{\lang}$ and all  $\CSequence \in X$, 
$(\Model_X^\calltype, \CSequence) \models  \phi \implies
(\Model_Y^\calltype, \CSequence) \models  \phi$.}
\]
\end{lemma}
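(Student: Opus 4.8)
The plan is to prove the statement by structural induction on $\phi \in \hat{\lang}$, crucially exploiting the fact that $\hat{\lang}$ contains only \emph{literals}, the connectives $\vee$ and $\land$, and the \emph{existential} modal operator $\hat{K}_a$ — so there is no negation in front of modalities and no universal $K_a$. The monotonicity direction (from the smaller model $\Model_X^\calltype$ to the larger model $\Model_Y^\calltype$) is exactly what the existential quantifier hidden in $\hat{K}_a$ makes possible.

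First, for the base case, suppose $\phi$ is a literal, i.e., $F_a S$ or $\neg F_a S$. By Definition~\ref{def:truth} the truth of a literal at $\CSequence$ depends only on $\CSequence(\init)$, and not on the surrounding set of call sequences or the indistinguishability relations. Hence $(\Model_X^\calltype, \CSequence) \models \phi$ iff $(\Model_Y^\calltype, \CSequence) \models \phi$, and in particular the implication holds. For the inductive step with $\phi = \phi_1 \land \phi_2$ or $\phi = \phi_1 \vee \phi_2$: if $(\Model_X^\calltype, \CSequence) \models \phi$ then by the semantics the corresponding disjunct(s)/both conjuncts hold at $\CSequence$ in $\Model_X^\calltype$, so by the induction hypothesis applied to $\phi_1$ and $\phi_2$ (noting $\CSequence \in X \sse Y$) the same subformulas hold at $\CSequence$ in $\Model_Y^\calltype$, whence $(\Model_Y^\calltype, \CSequence) \models \phi$.

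The only case that genuinely uses the restriction to the existential fragment is $\phi = \hat{K}_a \psi$, which unfolds as $\neg K_a \neg \psi$. By the relativised semantics, $(\Model_X^\calltype, \CSequence) \models \hat{K}_a \psi$ means there exists $\CSequenced \in X$ with $\CSequence \sim^\calltype_a \CSequenced$ and $(\Model_X^\calltype, \CSequenced) \models \psi$. Since $X \sse Y$, this same $\CSequenced$ lies in $Y$, and $\CSequence \sim^\calltype_a \CSequenced$ still holds when $\sim^\calltype_a$ is restricted to $Y \times Y$ rather than $X \times X$ (restriction to a larger set only adds pairs, it does not remove the pair $(\CSequence,\CSequenced)$). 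By the induction hypothesis applied to $\psi$ at the sequence $\CSequenced \in X$, we get $(\Model_Y^\calltype, \CSequenced) \models \psi$. Hence there is a witness in $Y$, i.e.\ $(\Model_Y^\calltype, \CSequence) \models \hat{K}_a \psi$.

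The step I expect to require the most care — and the reason the language had to be restricted in exactly this way — is precisely the modal case: had we allowed the universal operator $K_a$ (equivalently, negation in front of $\hat{K}_a$), the argument would break, since enlarging $X$ to $Y$ adds \emph{more} $\sim^\calltype_a$-equivalent sequences that a universal quantifier must also satisfy, and there is no reason those new sequences satisfy $\psi$. The existential fragment $\hat{\lang}$ sidesteps this: an existential witness in $X$ survives verbatim in $Y$. One small point worth stating explicitly in the write-up is that restricting $\sim^\calltype_a$ to $Y\times Y$ versus $X\times X$ does not affect whether a given pair from $X\times X$ is related — this is immediate from the definition but deserves a sentence so the "the same $\CSequenced$ works" step is rigorous.
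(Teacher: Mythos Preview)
Your proof is correct and takes essentially the same approach as the paper: structural induction on $\phi \in \hat{\lang}$, with the only nontrivial case being $\phi = \hat{K}_a \psi$, handled by noting that a witness $\CSequenced \in X$ for the existential modality persists in $Y$ and applying the induction hypothesis. The paper's proof is simply a terser version of yours, treating the literal and Boolean cases as not requiring explanation.
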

\begin{proof}
The only case that  requires explanation is when $\phi$ is
of the form $\hat{K}_a \psi$. Suppose that
$(\Model_X^\calltype, \CSequence) \models  \phi$. Then for some $\CSequenced \in X$
such that $\CSequence \sim^{\calltype}_a \CSequenced$, 
$(\Model_X^\calltype, \CSequenced) \models  \psi$. By the induction hypothesis
$(\Model_Y^\calltype, \CSequenced) \models  \psi$, so by definition
$(\Model_Y^\calltype, \CSequence) \models  \phi$. 
\qed
\end{proof}

Define next an operator
$\rho^P: 2^{\CSequences^\tau} \to 2^{\CSequences^\tau}$ by
\[
\rho^P(X) = X \cap \CSequences^\calltype_{(P,X)}.
\]
That is, $\rho^P$ removes from a given set $X$ of call sequences those
that are not $(P,X)$-compliant. What we are after is a set from which no
sequences would be removed, so a fixpoint of $\rho^P$. 

\begin{proposition} \label{prop:CK} Suppose the epistemic guards
  of a protocol $P$ are all from $\hat{\lang}$. Then there exists an
  $X \subseteq \CSequences^\tau$ such that $X = \rho^P(X)$.
\end{proposition}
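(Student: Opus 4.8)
**The plan is to find the fixpoint as a greatest fixpoint, obtained by iterating $\rho^P$ downward from the full set $\CSequences^\tau$.** The key observation is that $\rho^P$ is monotone on the complete lattice $(2^{\CSequences^\tau},\sse)$: if $X\sse Y$ then $\rho^P(X)\sse\rho^P(Y)$. Granting monotonicity, the Knaster--Tarski theorem immediately yields a fixpoint (in fact a greatest one), so the whole proof reduces to establishing that $\rho^P$ is monotone. This is exactly where the restriction of epistemic guards to $\hat{\lang}$, and hence Lemma~\ref{lem:mono}, does the work.

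To prove monotonicity, fix $X\sse Y\sse\CSequences^\tau$; I must show $\rho^P(X)=X\cap\CSequences^\calltype_{(P,X)}\sse Y\cap\CSequences^\calltype_{(P,Y)}$. Since $X\sse Y$, it suffices to show $X\cap\CSequences^\calltype_{(P,X)}\sse\CSequences^\calltype_{(P,Y)}$, i.e.\ that every $(P,X)$-compliant sequence lying in $X$ is also $(P,Y)$-compliant. This I would do by induction on the length of the sequence, mirroring the inductive definition of the relativised computation tree. The base case $\epsilon$ is immediate since $\epsilon$ lies in both trees by fiat. For the step, suppose $\CSequence.\Call\in X\cap\CSequences^\calltype_{(P,X)}$ where $\CSequence.\Call$ was placed in $\CSequences^\calltype_{(P,X)}$ via some instruction $\phi\to\Call\in P$, meaning $\CSequence\in X\cap\CSequences^\calltype_{(P,X)}$ and $(\Model_X^\calltype,\CSequence)\models\phi$. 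First, $\CSequence$ (being a prefix of $\CSequence.\Call\in X$) is in $X$, and lies in $\CSequences^\calltype_{(P,X)}$; hence by the induction hypothesis $\CSequence\in\CSequences^\calltype_{(P,Y)}$, and $\CSequence\in X\sse Y$. Second, since the guard $\phi$ is in $\hat{\lang}$, Lemma~\ref{lem:mono} applied to $X\sse Y$ gives $(\Model_Y^\calltype,\CSequence)\models\phi$. The Step clause for $\CSequences^\calltype_{(P,Y)}$ then places $\CSequence.\Call$ in $\CSequences^\calltype_{(P,Y)}$, completing the induction. This establishes $\rho^P(X)\sse\rho^P(Y)$.

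With monotonicity in hand, consider the descending chain obtained by transfinite iteration: $X_0=\CSequences^\tau$, $X_{\lambda+1}=\rho^P(X_\lambda)$, and $X_\lambda=\bigcap_{\mu<\lambda}X_\mu$ at limits. Monotonicity forces this chain to be non-increasing ($X_1=\rho^P(\CSequences^\tau)\sse\CSequences^\tau=X_0$, and then $X_{\lambda+1}=\rho^P(X_\lambda)\sse\rho^P(X_{\mu})=X_{\mu+1}$ whenever $X_\lambda\sse X_\mu$, with the limit case handled by monotonicity on each component). Since the chain is non-increasing in the set lattice, it stabilises at some ordinal, and the stable value $X$ satisfies $X=\rho^P(X)$; alternatively one simply invokes Knaster--Tarski directly to get the greatest fixpoint $X=\bigcup\set{Z\sse\CSequences^\tau : Z\sse\rho^P(Z)}$. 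Either way we obtain the desired $X$.

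**The main obstacle** is the verification of Lemma~\ref{lem:mono}'s role here, namely that the guard semantics really is monotone in the domain $X$ in the $\sim^\tau_a$-sense: enlarging $X$ adds more witnesses for the existential modalities $\hat{K}_a$ but no new universal obligations, precisely because $\hat{\lang}$ contains no unnegated $K_a$. This is the conceptual crux and is exactly what the paper isolated as Lemma~\ref{lem:mono}; once that is granted, everything else is the routine Knaster--Tarski packaging together with the straightforward length induction transferring $(P,X)$-compliance up to $(P,Y)$-compliance. One subtlety worth care is that $\rho^P(X)$ intersects with $X$, so fixpoints are automatically subsets witnessing their own compliance; this is what makes the fixpoint the honest "no-sequence-removed" object the informal discussion was after.
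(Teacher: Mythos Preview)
Your approach is essentially identical to the paper's: prove $\rho^P$ monotone by showing $X\cap\CSequences^\calltype_{(P,X)}\sse\CSequences^\calltype_{(P,Y)}$ via induction on sequence length (applying Lemma~\ref{lem:mono} to transfer satisfaction of the guard from $\Model_X^\calltype$ to $\Model_Y^\calltype$), then invoke Knaster--Tarski. One small wart: your parenthetical ``$\CSequence$ (being a prefix of $\CSequence.\Call\in X$) is in $X$'' is not a valid justification, since $X$ is an arbitrary subset of $\CSequences^\tau$ and need not be prefix-closed; but this is harmless, because you had already correctly extracted $\CSequence\in X\cap\CSequences^\calltype_{(P,X)}$ directly from the Step rule one line earlier, which is exactly how the paper argues.
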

\begin{proof}
  Suppose that $X \subseteq Y$ and
  $\CSequence \in X \cap \CSequences^\calltype_{(P,X)}$.  We prove that
  $\CSequence \in \CSequences^\calltype_{(P,Y)}$ by induction on the
  length of $\CSequence$.  If $\CSequence = \epsilon$, then
  $\CSequence \in \CSequences^\calltype_{(P,Y)}$ by the Base condition.

  Otherwise, by the Step condition $\CSequence$ is of the form
  $\CSequence'.\Call$, where
  $\CSequence' \in X \cap \CSequences^\calltype_{(P,X)}$, and for some
  $\phi \in \hat{\lang}$,
  $(\Model_X^\calltype, \CSequence') \models \phi$, and
  $\phi \to \Call \in P$.  By the induction hypothesis
  $\CSequence' \in \CSequences^\calltype_{(P,Y)}$.  Further,
  $\CSequence' \in Y$ and by Lemma \ref{lem:mono}
  $(\Model_Y^\calltype, \CSequence') \models \phi$, so
 $\CSequence \in \CSequences^\calltype_{(P,Y)}$. 

 It follows that $\rho^P$ is a monotonic function, that is,
 $X \subseteq Y$ implies $\rho^P(X) \subseteq \rho^P(Y)$.  By the
 Knaster-Tarski theorem of \cite{Tar55} $\rho^P$ has therefore
 fixpoints, including a largest and a smallest one.  \HB
\end{proof}

Intuitively, when the domain $X \sse \CSequences^\calltype$ of a
gossip model is a fixpoint of $\rho^P$, then the restriction of the
definition of the indistinguishability relations $\sim^\tau_a$ to such
a domain has the effect that the call sequences considered possible by the
agents coincide with the call sequences generated by the protocol.
Such gossip models incorporate then the assumption that there is
common knowledge among the agents about the protocol in use.

Furthermore, by the Knaster-Tarski theorem one can construct the
largest fixpoint of $\rho^P$ by iteratively applying $\rho^P$ to
$\CSequences^\tau$. Such fixpoint $\nu \rho^P$ is the most natural
domain for a gossip model that realises the assumption of common
knowledge of the protocol, with the ($P, \nu \rho^P)$-compliant call
sequences viewed as the $P$-compliant ones.

When the privacy degree is $\myprivacy{1}$ such a gossip model
has a very simple structure, namely
$(\CSequences_P^\calltype, \set{\sim^\calltype_a}_{a \in \Agents})$.

\begin{corollary}
  Consider a protocol $P$ and a call type $\tau$ such that
  $\tau(\prtype) = \myprivacy{1}$.  Then 
  $\nu \rho^P = \CSequences_P^\tau$.
\end{corollary}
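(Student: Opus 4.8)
The plan is to prove the stronger statement that $\CSequences_P^\tau$ is exactly the greatest fixpoint of $\rho^P$. Since the proof of Proposition~\ref{prop:CK} shows $\rho^P$ is monotone and hence (Knaster--Tarski) has a greatest fixpoint $\nu\rho^P$, it suffices to check two things: that $\CSequences_P^\tau$ is itself a fixpoint, $\rho^P(\CSequences_P^\tau) = \CSequences_P^\tau$; and that every fixpoint $X = \rho^P(X)$ satisfies $X \sse \CSequences_P^\tau$. Together these say $\CSequences_P^\tau$ is a fixpoint dominating all fixpoints, so $\nu\rho^P = \CSequences_P^\tau$.

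The observation driving both parts is that, when $\tau(\prtype) = \myprivacy{1}$, the relativised truth relation does not depend on the chosen domain. So the first step I would take is to prove, by induction on $\phi$, the auxiliary claim: for every $X \sse \CSequences^\tau$, every $\CSequence \in X$ and every formula $\phi \in \lang$,
\[
(\Model_X^\tau, \CSequence) \models \phi \quad\text{iff}\quad \CSequence \models^\tau \phi .
\]
The atomic and Boolean cases are immediate. For $\phi = K_a\psi$, by Lemma~\ref{pro:1}$(i)$ the relation $\sim^\tau_a$ is the identity, so its restriction to $X\times X$ relates $\CSequence$ (which lies in $X$) only to itself; hence $(\Model_X^\tau, \CSequence) \models K_a\psi$ iff $(\Model_X^\tau, \CSequence) \models \psi$, and one closes the case with the induction hypothesis and Note~\ref{not:pr1}. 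Since $\hat{K}_a$ abbreviates $\neg K_a\neg$, this covers all of $\hat{\lang}$, hence all epistemic guards of $P$: for any $X$, a guard holds at a sequence $\CSequence \in X$ in $\Model_X^\tau$ iff it holds at $\CSequence$ in the full gossip model.

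Given the auxiliary claim, both parts are short inductions on the length of a call sequence. For $\rho^P(\CSequences_P^\tau) = \CSequences_P^\tau$ it is enough to show $\CSequences_P^\tau \sse \CSequences^\tau_{(P,\CSequences_P^\tau)}$: if $\CSequence.\Call \in \CSequences_P^\tau$ because $\CSequence \in \CSequences_P^\tau$, $\CSequence \models^\tau \phi$ and $\phi \to \Call \in P$, then the induction hypothesis puts $\CSequence$ in $\CSequences_P^\tau \cap \CSequences^\tau_{(P,\CSequences_P^\tau)}$, and the auxiliary claim upgrades $\CSequence \models^\tau \phi$ to $(\Model_{\CSequences_P^\tau}^\tau, \CSequence) \models \phi$, so the Step rule of the relativised computation tree yields $\CSequence.\Call \in \CSequences^\tau_{(P,\CSequences_P^\tau)}$. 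Conversely, for a fixpoint $X = \rho^P(X)$ we have $X = X \cap \CSequences^\tau_{(P,X)}$, hence $X \sse \CSequences^\tau_{(P,X)}$; an induction on length then shows every $\CSequence \in X \cap \CSequences^\tau_{(P,X)}$ lies in $\CSequences_P^\tau$, using that in the inductive step $\CSequence = \CSequence'.\Call$ the witnessing prefix satisfies $\CSequence' \in X \cap \CSequences^\tau_{(P,X)}$ and $(\Model_X^\tau, \CSequence') \models \phi$ with $\phi \to \Call \in P$: the induction hypothesis gives $\CSequence' \in \CSequences_P^\tau$, the auxiliary claim gives $\CSequence' \models^\tau \phi$, and the non-relativised Step rule then puts $\CSequence$ in $\CSequences_P^\tau$. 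Thus $X = X \cap \CSequences^\tau_{(P,X)} \sse \CSequences_P^\tau$.

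I expect the only subtle point to be the circularity flagged in the discussion preceding Proposition~\ref{prop:CK}: in general, evaluating an epistemic guard presupposes the very model one is trying to build. For the privacy degree $\myprivacy{1}$ this circularity is harmless, and the auxiliary claim is precisely what makes it so — the guards are insensitive to the domain because each $\sim^\tau_a$ collapses to the identity. Everything else is bookkeeping over sequence lengths.
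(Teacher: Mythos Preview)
Your proof is correct and follows essentially the same approach as the paper: both hinge on the auxiliary observation that, for privacy degree $\myprivacy{1}$, relativised truth coincides with unrelativised truth (the paper phrases this via stripping the $K_a$'s to get $\phi'$, you phrase it as a direct equivalence), and both then show $\CSequences_P^\tau$ is a fixpoint by the same length induction. The only minor difference is in establishing maximality: the paper notes that $\rho^P(\CSequences^\tau) = \CSequences_P^\tau$ always holds, so by monotonicity any fixpoint $X$ satisfies $X = \rho^P(X) \sse \rho^P(\CSequences^\tau) = \CSequences_P^\tau$; you instead prove $X \sse \CSequences_P^\tau$ by a second direct induction on sequence length. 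Both are fine, and your version is arguably more self-contained since it does not rely on monotonicity from Proposition~\ref{prop:CK}.
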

\begin{proof}
  Note that we always have
  $\rho^P(\CSequences^\tau) = \CSequences_P^\tau$.  We now show that
  $\CSequences_P^\tau \sse
  \CSequences^\calltype_{(P,\CSequences_P^\tau)}$ by induction on the
  length of the call sequences. We only need to consider the induction
  step.  So consider some $\cc.\Call \in \CSequences_P^\tau$. By
  definition $\cc \in \CSequences_P^\tau$ and
  $\CSequence \models^\calltype \phi$, where $\phi \to \Call \in P$,
  and by the induction hypothesis
  $\cc \in \CSequences^\calltype_{(P,\CSequences_P^\tau)}$.

  Let $\phi'$ be obtained from $\phi$ by removing all occurrences of
  $K_a$ for all agents $a$.  By Note \ref{not:pr1} relativised to an
  arbitrary $X \sse \CSequences^\tau$ such that $\cc \in X$ we have
  $\CSequence \models^\calltype \phi$ iff
  $\CSequence \models^\calltype \phi'$ iff
  $(\Model_X^\calltype, \CSequence) \models \phi'$ iff
  $(\Model_X^\calltype, \CSequence) \models \phi$.  So in particular
  $(\Model_{\CSequences_P^\tau}^\calltype, \CSequence) \models \phi$
  and hence by definition
  $\cc.\Call \in \CSequences^\calltype_{(P,\CSequences_P^\tau)}$.

Consequently $\rho^P(\CSequences_P^\tau) = \CSequences_P^\tau \cap
\CSequences^\calltype_{(P,\CSequences_P^\tau)} = \CSequences_P^\tau$ and hence
$\CSequences_P^\tau$ is the largest fixpoint of $\rho^P$.
\qed
\end{proof}

The syntactic restriction on the epistemic guards used in Proposition
\ref{prop:CK} is clearly satisfied by Protocol \ref{protocol.mineen}
as its guards can be rewritten as $\hat{K}_i \neg F_j I$.  The same is
the case for all protocols studied in \cite{apt15epistemic} since by
Corollary \ref{cor:Ka} for all call types and all agents $a$ and $b$
the formulas $K_a F_a B$ and $F_a B$ are equivalent and consequently
each formula $F_a B$ can be replaced by $\neg \hat{K}_a \neg F_a B$.


\section{Conclusions}
\label{sec:conclusions}

We provided an in-depth study of 18 different types of communication
relevant for epistemic gossip protocols and modelled their epistemic
effects in a uniform way through different indistinguishability
relations. This led us to establish a precise map of the relative
informativeness of these types of communication (Theorem
\ref{thm:classification}). In turn, this result allowed us to prove
general results concerning the epistemic effects of call sequences
under different communication regimes (Propositions \ref{prop:1} and
\ref{prop:newnew}) and to advance a natural proposal on how to model and
analyse agents' common knowledge of gossip protocols (Proposition
\ref{prop:CK}), a still under-investigated issue in the literature.

Several natural directions for future research present themselves. We
mention three of them. The first question concerns the axiomatisation of
the modal language $\lang$ introduced in Section
\ref{sec:language}. This problem is parametrised by the underlying
indistinguishability relations introduced in Section
\ref{sec:relations}. For example, by Note \ref{not:pr1} the
equivalence $\phi \leftrightarrow K_a \phi$ holds for the privacy type
$\myprivacy{1}$ but not for the other two.

Actually, even the axiomatization of the $F_a S$
formulas is not straightforward, as it has to take into account the
nature of the communication.  Indeed, consider the following formula,
where $a \neq b$:
\[                                                                                              
\Big(F_b A \land \bigwedge_{i \neq a, b} \neg F_i A\Big) \to F_a B.                      
\]

It states that if agent $b$ is the only agent (different from $a$) familiar
with the secret of $a$, then agent $a$ is familiar with the secret of $b$.
A more general version is:
\[                                                                                              
\Big(\bigvee_{i \in X} F_i A \land \bigwedge_{i \not\in X \cup \{a\}} \neg F_i A\Big) \to 
\bigvee_{i \in X} F_a I,                                                                        
\]
where $a \not\in X$.

Intuitively it states that if somebody from a group $X$, to which $a$
does not belong, is familiar with her secret and nobody from outside
of the group $X$ (except $a$) is familiar with this secret, then agent
$a$ is familiar with a secret of somebody from the group $X$.
Clearly, both formulas are valid for the $\pushpull$ direction
type. 

In general such an axiomatisation project could be carried out
at several levels (cf.~\cite{gattinger18new}): by considering $F_i S$
formulas as primitive, as we did in this paper; or analysing them as
``knowing whether'' formulas (in epistemic logic notation,
$K_i S \vee K_i \neg S$) as in \cite{ADGH14}. Whether the
latter level of analysis can be easily reconciled with the one
proposed in this paper is an interesting open problem.

The second question addresses the problem of decidability of the 18
definitions of truth we introduced. In the terminology of this paper
\cite{AW18} established for the call type
$(\myprivacy{3}, \pushpull, \alpha)$ that the semantics and the
definition of truth are both decidable for the formulas without nested
modalities. It would be interesting to establish analogous results for
the remaining call types, ideally by providing a single, uniform proof
that generalises the arguments of \cite{AW18}.

The final question concerns the robustness of our analysis, and
specifically of the relationships identified in Theorem
\ref{thm:classification}, with respect to modes of gossip that involve
the transfer of higher-order epistemic information as introduced and
studied in \cite{herzig15how,herzig17how}. Intuitively, we would
expect this type of higher-order epistemic communication to have an
impact on the effects of the asymmetric communication types $\push$
and $\pull$ and for the full privacy \myprivacy{3} degree.

Finally, one could envisage other aspects of a call not considered in
this framework.  For example in \cite{ADGH14} yet another notion of
privacy was considered, according to which given a call $ab$ every
agent $c \neq a,b$ noted that at most one call took place. Then for
agent $c$ the call sequences $\epsilon$ and $ab$ are equivalent but
$\epsilon$ and $ab, ab$ are not.  Another possibility could be to
consider a notion of privacy that is intermediate between
\myprivacy{1} and \myprivacy{2}, according to which the caller is
anonymous but the callee not. Then for agent $c$ the call sequences
$ab$ and $ad$ are equivalent but $ab$ and $bd$ are not.

\section*{Acknowledgments}
We thank Hans van Ditmarsch for most useful and extensive discussions
on the subject of this paper. We are also grateful to anonymous
referees of this and earlier versions of this paper for helpful
comments.  The first author was partially supported by the NCN grant
nr 2014/13/B/ST6/01807.


\bibliographystyle{plain}  
\bibliography{gossip_biblio}

\end{document}